\newtheorem{remark}{Remark}
\newtheorem{definition}{Definition}
\newtheorem{proposition}{Proposition}
\newtheorem{theorem}{Theorem}
\begin{document}
%
\title{Energy control of  HVAC units for provable ancillary service provision
}
%
%
%

\author{Rupamathi~Jaddivada,~\IEEEmembership{Member,~IEEE,}
        Marija~D. Ili\'{c},~\IEEEmembership{Life~Fellow,~IEEE,}
        Mario~Garcia-Sanz,~\IEEEmembership{Member,~IEEE,}
        Mirjana~Marden,~\IEEEmembership{Member,~IEEE,}
        and ~ Sonja~Glavaski,~\IEEEmembership{Fellow,~IEEE}
\thanks{R. Jaddivada and M.D. Ilic are with,
Massachusetts Institute of Technology, Cambridge,
MA, 02139 USA e-mail: rjaddiva@mit.edu, ilic@mit.edu}
\thanks{ M. Garcia-Sanz and M. Marden are with Advanced Research Project Agency (ARPA-E), US Department of Energy, Washington, DC 20585 USA email:   marden\_mirjana@bah.com, mario.garcia-sanz@hq.doe.gov}
\thanks{S. Glavaski is currently with Pacific Northwest National Laboratory, 902 Battelle Boulevard, Richland, WA USA email:sonja.glavaski@pnnl.gov }
}

\maketitle

\begin{abstract}
In this paper, we consider the problem of controlling power consumption dynamics of residential heating, ventilation and air conditioning (HVAC) units so that they  follow the grid-side power specifications.
In order to do so, we design a novel dynamical energy controller which ensures regulation of the cumulative effects of  power imbalances. 
For this, 
we  derive a novel energy-based model that relates the HVAC physics-based dynamics to both real and reactive power balance at the point of interconnection with the grid. 
In contrast to several other approaches in the literature, we show that  a  limited number of HVAC units can  meet the stringent performance metrics set by the ARPA-E/NODES program on following the frequency  regulation signal, while maintaining consumer comfort. 
Theoretical and simulation-based model and control validation is provided by making use of real-world HVAC data. 
\end{abstract}

\begin{IEEEkeywords}
Demand response, Load flexibility, Themostatically controlled loads, Residential HVAC loads, Energy-based modeling and control
\end{IEEEkeywords}

%
\IEEEpeerreviewmaketitle

\section{Introduction}
Renewable Portfolio Standards (RPS) being adopted by several states in the US lead to large-scale deployment of solar panels and wind farms \cite{RPS}. These technologies are often not controlled by the grid operator and thus result in hard-to-predict disturbances in the grid. 
Shown in Fig. \ref{fig:CalISODuck} is the projected hard-to-predict in near real-time net load change in California over the next several years \cite{CalISODuck}, leading to increased reliance on ancillary services. 
\textit{Ancillary service}  includes frequency control, spinning reserves, and operating reserves, all related to additional power adjustments needed to offset the mismatch between the net demand and scheduled generation. 
Traditionally, large conventional generation units have been utilized for providing ancillary services. 
A major operational challenge is that these units have significant inertia and thus incur increased wear and tear, as they follow the net load patterns, such as the one indicated in Fig. \ref{fig:CalISODuck}.  

Therefore, utilities are exploring the potential of the flexibility provided by the residential consumer-end resources distributed throughout the grid, such as the electric vehicles, batteries and thermostatically controlled loads (TCLs) like electric water heaters, air conditioners and refrigeration units \cite{callaway2009tapping,koch2011modeling,qin2014modeling,alizadeh2014scalable}.  
This paper's particular interest is the flexibility offered by the heating, ventilation, and air conditioning (HVAC) units. They comprise 40\% of the total residential power demand and thus present an enormous potential for providing various types of ancillary services to the grid \cite{doe2015quadrennial}. 
Furthermore, their inherent thermal storage can be leveraged to precisely modify the power consumption and provide ancillary services while still meeting the desired temperature requirements of the end-user. 

\begin{figure}[!htbp]
	\centering
	\includegraphics[width= 1.0 \linewidth]{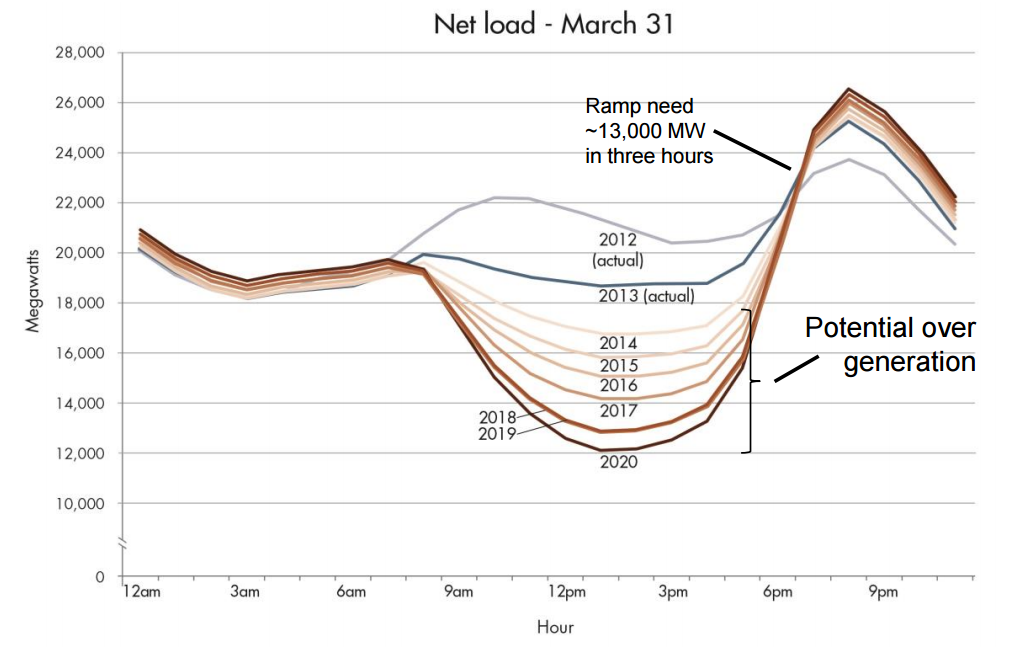}
	\caption{The duck curve showing steep ramping needs and over generation of wind farms in California \cite{CalISODuck}}
	\label{fig:CalISODuck}
\end{figure}

Different control methods broadly categorized as direct and indirect load control can enable  HVACs  to participate in ancillary services. In the direct load control method, the utility or the aggregator models many HVAC units and interrupts the power signals  of several  HVAC units to obtain desired aggregate performance. 
This method requires accurate prediction of the thermal flexibility of each HVAC unit to have a reliable supply of ancillary services at an aggregate level 
\cite{callaway2009tapping,zhang2013aggregated,almassalkhi2017packetized}. 
Through the indirect load control method, the consumer or the appliance automatically adjusts its switching cycles in response to real-time electricity prices  or frequency deviations in power systems, thereby providing ancillary services indirectly \cite{lu2004state,subbarao2013transactive,shu2012dynamic}. 
The relationship between the HVAC temperature setpoints and the real-time pricing is often complicated. Thus indirect load control method results in deviations from the power schedules, especially when the cluster of loads is not diverse enough or if a significant fraction of them is operating at their limits \cite{hao2014aggregate,lu2012evaluation}. 
Some other control methods combining these different methods include passivity-based distributed optimization \cite{chinde2017passivity,mukherjee2012building,hatanaka2017integrated}, multi-stage robust optimization considering detailed HVAC models \cite{qureshi2018hierarchical}, two layered-control distributing the decision-making between local and global controllers \cite{navidi2018two}. 
However, these approaches do not consider the practical limitations of the controllers or the availability of sensor measurements. 
For example, the HVAC needs to remain in the OFF state for a specific length of time before switching it back ON. These are called short-cycling constraints, which have been considered to some extent in \cite{zhang2013aggregated,hao2014aggregate,sanandaji2015ramping}. 
More importantly, to allow for the participation of a cluster of loads, the NODES program of ARPA-E has indicated the performance metrics in terms of the response time, ramp time, and availability of units explained in Section 2 \cite{ARPAEFOA}. However, most methods proposed in the literature can meet these metrics only upon considering a large number of units that are diverse enough \cite{hao2014aggregate,lu2012evaluation,bernstein2021final}. 
Thus in this paper, we consider effects of a limited number of  HVAC units and assess their capability in following the scaled-down regulation signal, all while ensuring provability. By provability we mean that the approach needs to result in reproducible results with theoretical performance guarantees. 

In Section \ref{Sec:Overview}, we 
review conventional HVAC models, and identify sub-objectives needed to pose the control problem.
In Section \ref{Sec:Modeling}, we propose a novel energy-space model for the HVAC system 
building upon our recent work
\cite{ilic2018fundamental,ilic2018multi,ilic2019exergy,ilic2020unified}. 
Utilizing this model, we next summarize our proposed 
multi-layered energy-based control of HVAC in Section \ref{Section:Control}.
This control comprises a device-level sliding mode control tracking an output variable in energy-space to a reference value, explained in Section \ref{Sec:PrimaryControl}. 
Novel closed-loop droop relations 
mapping the controlled output, reference signals and the regulation signals  have been derived next in Section \ref{Sec:SecondaryControl}. These are further utilized in 
a slower time-scale MPC-based controller to maximize the efficiency and meet the regulation signals.
We finally conclude  by summarizing contributions of this paper and  suggest  directions for future research in Section \ref{Sec:Conclusions}.

\section{Problem formulation}
\label{Sec:Overview}
In this section, we review residential HVAC modeling along with numerous constraints imposed by the consumer, HVAC manufacturer and the utility operator for it to qualify as a reliable source of regulation. 

\subsection{Conventional models of  residential HVAC system}
HVAC system comprises the space or zone to be heated/cooled and the auxiliary electrical equipment that injects hot/cool air into the space. As a result, we can characterize two types of state variables, as shown in Fig. \ref{fig_HVACOL}.
\begin{figure}[ht]
	\centering 
	{\includegraphics[width=0.9\linewidth]{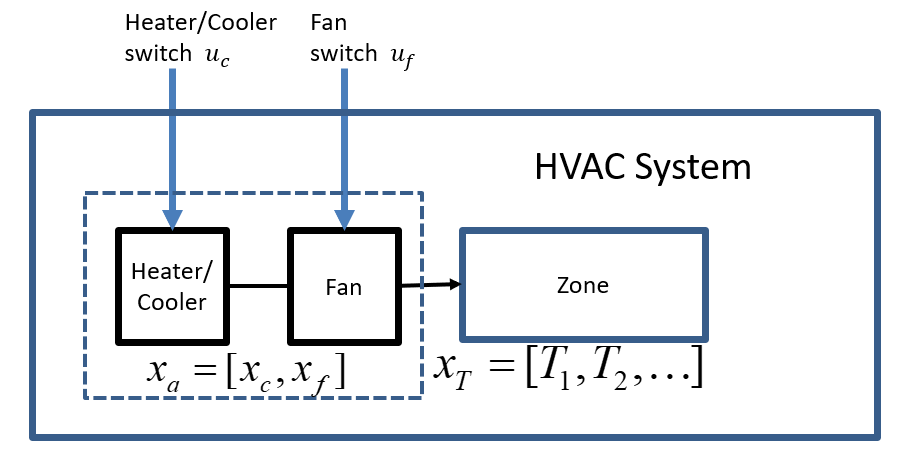}}
	\caption{Schematic of a HVAC system in open loop: The space is modeled using temperatures of multiple zones denoted using vector $x_T$. There are additional auxillary states $x_a$ corresponding to the fan and heater/cooling unit. The blower fan switch and the cooler/heater switch $u_c$ are the control inputs}
	\label{fig_HVACOL}
\end{figure}
The first set of variables characterize the temperatures of multiple zones,  representing the different rooms, walls and are denoted as $x_T$ \cite{deng2010building}. 
The electrical equipment comprises a fan that circulates air into the zones, while the heater/cooling unit maintains the temperature of the supply air. Let us denote the respective state variables using $x_a$. 
The models of these sub-systems are not exactly known. 

\textbf{Objective 1:}
The control design should be robust to model and parameter uncertainties and be 
simple enough to utilize limited available sensor measurements.

In residential HVAC systems, often single zone temperature is utilized to characterize the thermal load \cite{sanandaji2015ramping}. 
Furthermore, the power rating of the fan is orders of magnitude smaller than that of the compression system or arc furnace for cooling or heating, respectively. Therefore, in the rest of the paper, we 
only consider single control input, which is the switch position of the cooling unit, denoted as $u$. 
\begin{subequations}
\begin{align}
    C \dot{T} &= -\frac{1}{R} (T-T_0) + {P^{rated} u}
\label{Eqn_ConvModel} \\
 P^{rated} &= \dot{m}_a C_p (T_{sup} - T)   
 \label{Eqn_Psupply}
\end{align}
\end{subequations}
$T$, $T_0$ and $T_{sup}$ respectively denote the zonal temperature, ambient temperature and the supply temperature that is maintained by the compression system in heater/cooler block. 
$C, R$ represent the thermal capacitance and resistance, respectively. 
$P^{rated} u$ is the heating or cooling rate for with $u$ being the switch position taking a positive or negative value respectively. 
$\dot{m}_a$ is the airflow rate, $C_p$ is the specific heat capacity of the air. 
$T_{sup}, \dot{m}_a$ form the auxiliary state variables in Fig. \ref{fig_HVACOL} and are typically assumed to be constant. 

\textbf{Objective 2:} 
It is required to ensure that the temperature reaches the setpoint $T^{ref}$ as set by the consumer with a tolerable temperature deviations of $T_{db}$, i.e. $T(t) \in \left[T^{ref} - T_{db}, T^{ref} + T_{db}\right] \forall t$ 

The pre-programmed embedded automation $u_{em}$  in today's HVACs is based on PID controllers and satisfies Objective 2 under all circumstances. 
In context of Eqn. \eqref{Eqn_ConvModel}, 
$u_{em}$ is interpreted as a bang-bang control, with the following logic assuming the heating mode of operation. 
\begin{subequations}
\begin{align}
    u = 
    u_{em}(t) &= 1 \quad \text{if} \qquad T(t) < T^{ref} - T_{db}\\
    &=0 \quad \text{if} \qquad T(t) > T^{ref} + T_{db}\\
    &=u_{em}(t-\delta t) \qquad \text{otherwise}
\end{align}
\label{Eqn_SwitchConv}
\end{subequations}
Here, $\delta t$ is the continuous time implementation timestep which is dictated by the bandwidth of the available sensor measurements. 

\textbf{Objective 3:}
 Several considerations from the perspective of the control designer and manufacturer
 respectively must be met as follows:  
\begin{itemize}
    \item The control needs to be implemented using the knowledge of temperature and electrical power consumption alone. This is because sensors for other internal variables are unavailable and are expensive. 
    \item The compressor needs to be left ON at least for 5 minutes and it can remain in that state at most for 15 minutes. 
\end{itemize}

\subsection{Ancillary services}
The aggregator or service provider 
acts as the middle man between the individual HVAC units and the grid operator. 
The aggregator's responsibility is to provide the regulation service as commanded by the grid operator over 
minutes timescale as given by the discrete time samples $k$ shown in Figure \ref{fig:TimeScales}. 
However, there are thermal and electric power dynamic processes evolving at finer time granularity of seconds and milliseconds,  respectively. 
The controller actions need to be taken at these finer timescales to have their accumulated effect over minutes timescale track the commanded regulation signal. 

\begin{figure}[!htbp]
	\centering
	\includegraphics[width=  \linewidth]{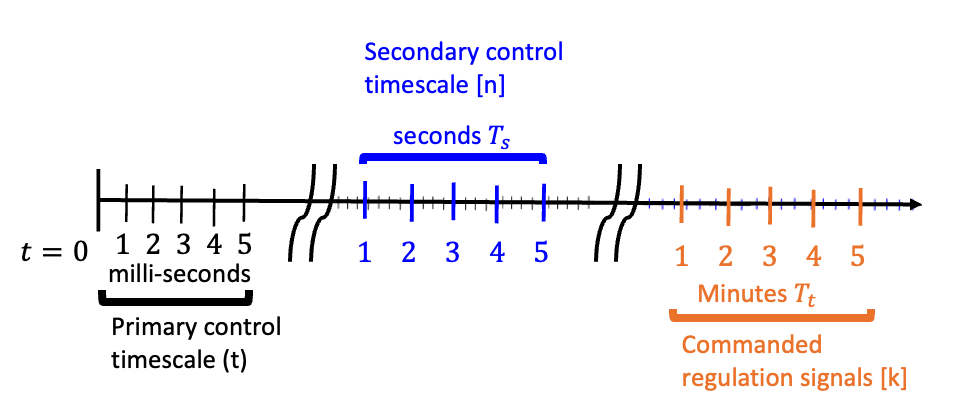}
	\caption{Multi-timescale evolution of control objectives: Aggregate regulation signals arrive every $kT_t$ instant; Individual HVAC unit commands need to arrive every $nT_s$ step while the power consumption dynamics evolve at a much faster continuous millisecond timescale $(t)$}
	\label{fig:TimeScales}
\end{figure}

\textbf{Objective 4:}
From the grid perspective, if single HVAC unit were to participate in ancillary services, 
it should consume  average power  every $kT_t$ instant equal to the 
commanded regulation signal $P^{reg}(kT_t)$ 
\begin{equation}
\sum_{t = (k-1)T_t}^{kT_t} {P(t)} = {P^{reg}}(k{T_t})
\label{Eqn_RegSingalCont}
\end{equation}

\subsection{ARPA-E performance metrics}
The utility performs a few tests to assess if the units willing to participate in ancillary services meet specific performance metrics before being deemed a reliable provider. 
There is no clear rationale for setting these performance metrics. 
These metrics are not standard even for current grid operations where controllable generation units primarily provide ancillary services. 
Following the same principles, ARPA-E has proposed performance metrics for different types of ancillary services as needed to be met by a group of small-capacity flexible units coordinated by and collectively referred to as 
Network Optimized Distributed Energy System (NODES)\cite{ARPAEFOA}. 
We interpret these metrics in the context of a single HVAC unit of interest in this paper. 

\textbf{Objective 5:}
The averaged power consumption over $T_t$ rate $P[k]$ of each HVAC unit should observe the following specifications as it is commanded to follow step change in regulation signal at time $t=0$, as shown in Fig. \ref{fig:PerformanceMetrics}.  
\begin{figure}[!htbp]
	\centering
	\includegraphics[width= 0.8 \linewidth]{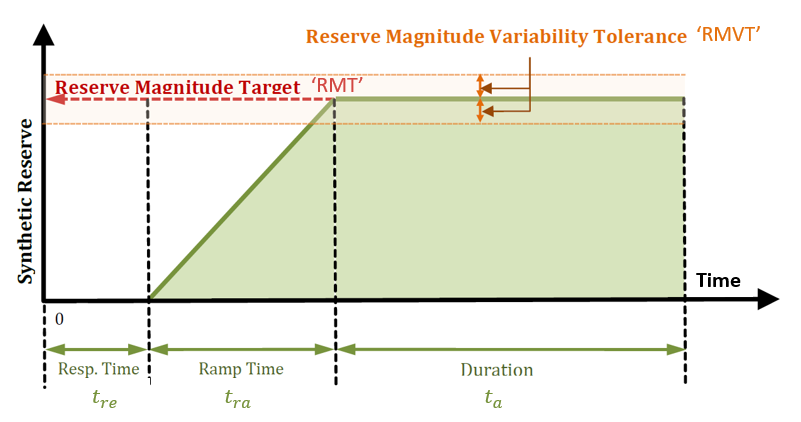}
	\caption{Performance metrics to be satisfied for participating in ancillary services \cite{ARPAEFOA}}
	\label{fig:PerformanceMetrics}
\end{figure}

\begin{itemize}
    \item Response time ($t_{\text{re}}$): 
    The time it takes for the actuators to respond to the commanded signals - less than 5 seconds. 
    \item Reserve magnitude target (RMT): 
    The maximum step change in regulation signal that the HVAC unit can track - 
    7\% of rated capacity \footnote{It is stated 7\% of the peak inflexible demand in the original specifications}. 
    \item Reserve magnitude variability tolerance (RMVT): 
    Maximum deviation between the actual power adjustments and the reserve magnitude target - 5\% of 'RMT'
    \item Ramp time ($t_{\text{ra}}$): 
    The maximum time it takes for the HVAC unit to reach the 'RMT' and stay within 'RMVT' deviations - 5 seconds 
    \item Availability time ($t_{\text{a}}$): 
    The period of time for which the reserves can be supplied - 3 hours
\end{itemize}

We ensure satisfaction of these performance metrics by explicitly considering some of these metrics in the control design as will be explained in Section \ref{Section:Control}.

\section{Energy-based modeling of HVAC}
\label{Sec:Modeling}
\subsection{Motivation for new modeling tools}
The thermal model in Eqn. \eqref{Eqn_ConvModel} is most commonly used in the literature. 
This equation is based on the first law of energy conservation principle. 
However, control designed based on the first law alone suffers major inconsistencies. 
For instance, the power absorbed by the HVAC unit is given by 
\begin{equation}
    P^{r,out} = \frac{1}{R}\left(T-T_0\right) + C\frac{dT}{dt}
\end{equation}
The first component is the useful work done by the HVAC unit and the second term is the rate of change of stored energy. 
On the other hand, the injected power into the HVAC unit 
for the embedded switching logic in Eqn. \eqref{Eqn_SwitchConv} 
is equal to
\begin{equation}
    P^{r,in} = P^{rated} u_{em}
\end{equation}
If the control were to satisfy the energy conservation principles, the two values of power interactions $P^{r,out}$ and $P^{r,in}$ should be equal. However, this is not the case as evidenced  by the 
the time trajectory plots in Figure \ref{fig_HVACPowerImbalance}. 
It may appear that this imbalance stems from the discrete control actions. However, the accumulated energy imbalance over time also doesn't average out to zero as shown in Figure \ref{fig_EnergyImbalance}. 
\begin{figure}[!htbp]
	\centering 
	{\includegraphics[width=0.7\linewidth]{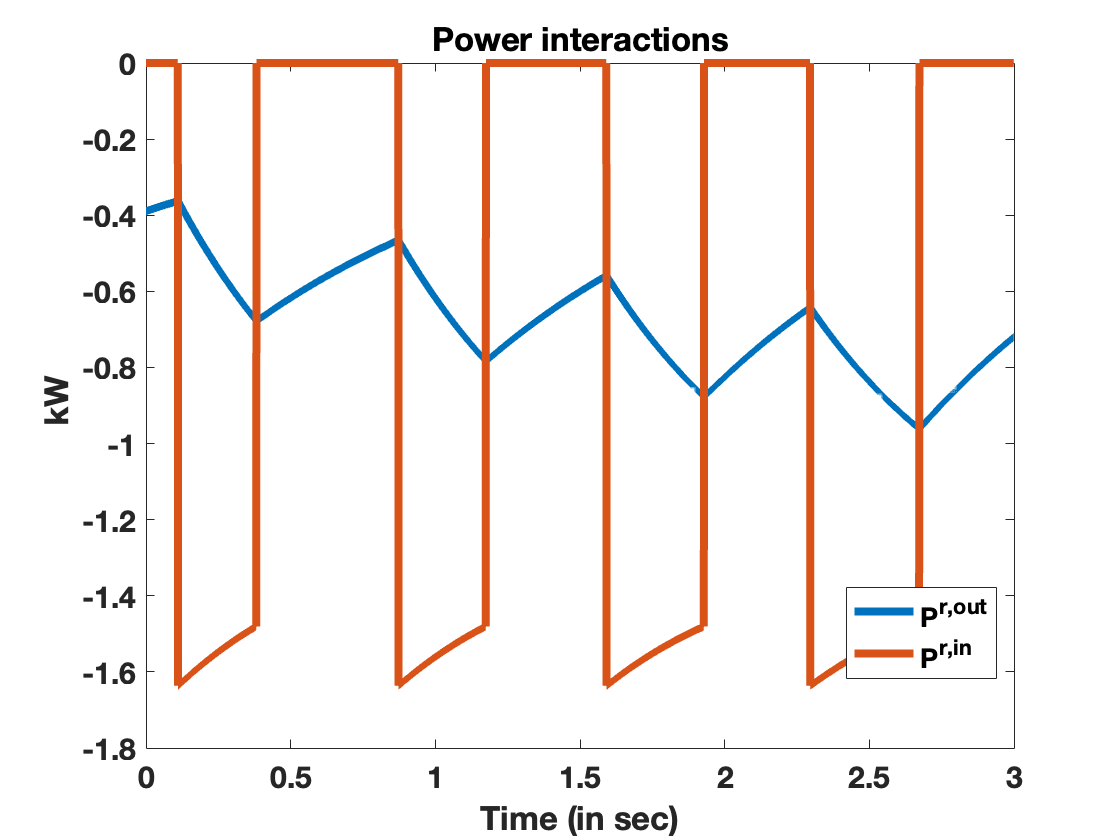}}
	\caption{The time trajectories of instantaneous power absorption and power injections seen by the HVAC at the electrical terminal}
	\label{fig_HVACPowerImbalance}
\end{figure}

\begin{figure}[!htbp]
	\centering 
	{\includegraphics[width=0.7\linewidth]{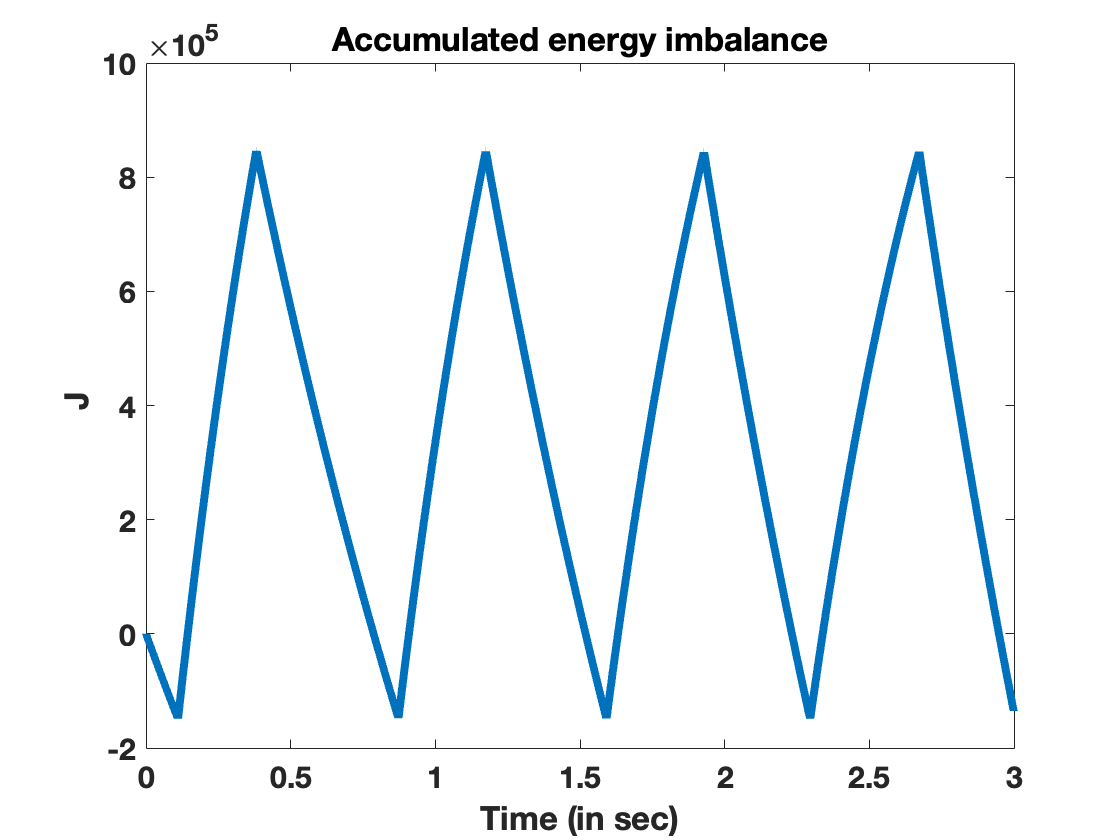}}
	\caption{Accumulated energy imbalance at the electrical terminal $\int_0^t \left(P^{r,out} - P^{r,in} \right)dt$}
	\label{fig_EnergyImbalance}
\end{figure}
This inconsistency stems fundamentally because the controllers based on first law of energy conservation principle alone do not align the rate at which power evolves thereby leaving a power imbalance offset that creates accumulated energy imbalance over time.  We thus introduce a novel energy modeling approach to explicitly accommodate the rate of change of power variables and control its  cummulative effects on temperature and comfort. 

\subsection{Energy modeling}
The schematic of energy interaction for the detailed HVAC model is shown in Fig. \ref{fig_HVACEnergyOLDetail}. 
The compressor system can operate either in heating/cooling mode that dictates the supply temperature $T_s$ into the zone, used in the simplified model in Eqn. \eqref{Eqn_ConvModel}. 
The Air Handling Unit (AHU) injects air at desired temperature into the zone, which circulates back through the fan. 
The zone is associated with thermal energy $U$ while the AHU unit is associated with stored energy $E_a$. The AHU comprises a blower fan, a and a compression unit or chiller. These can both be individually controlled. We assume that these units have internal automation. For enabling the HVAC units' participation in ancillary services, we incorporate an additional switch shown in Fig. \ref{fig_HVACEnergyOLDetail} that modulates the rate at which power enters the HVAC unit. It is important to note that we do not alter the existing power consumption as needed by the chiller and fan while controlling the rate of reactive power injections of the switching control action. 

\begin{figure}[!htbp]
	\centering 
	{\includegraphics[width=1.0\linewidth]{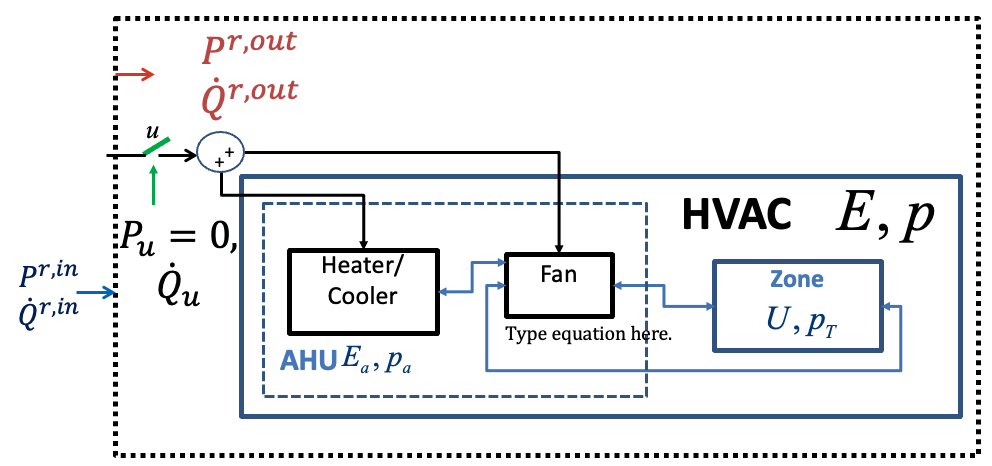}}
	\caption{Schematic of HVAC system in energy space. 
	The overall system is associated with stored energy $E$ and its rate of change $p$ which can be controlled by the switch shown in green. The switch itself injects non-zero $\dot{Q}_u$, resulting in net absorbed HVAC power variables $P^{r,out}$, $\dot{Q}^{r,out}$, while the grid power injections at instantaneous time are $P^{r,in}, \dot{Q}^{r,in}$}
	\label{fig_HVACEnergyOLDetail}
\end{figure}
All the interactions between subsystems can get overly complex if modeled in conventional state space. 
We instead assess the overall energy conversion dynamics, and strive to design the control without exactly knowing the internal physical models. 

\subsection{Preliminary definitions of variables in energy space}
In any energy domain, power variable are defined using the port variables called effort flow variables. 
These pairs are (voltage, current) in electric domain, (pressure, volume flow rate) in fluid domain and similarly based on Eqn. \eqref{Eqn:IntEnergy}, they are (Temperature, entropy flow) in thermal domain. 
The time integral of this quantity is defined as the energy injected into the component.
We start by defining thermal energy. 
\begin{definition}
	The internal or thermal energy of a thermodynamic fluid (air in the case of HVAC) can be expressed as follows: 
	\begin{equation}
	U = \int_0^t T dS = \int_0^t T \frac{dS}{dt} dt =   \int_0^t T S_f dt
	\label{Eqn:IntEnergy}
	\end{equation}
	Here, $dS$ is the incremental entropy of the medium, $S_f = \dot{S}$ is the net entropy flow into finite volume at a temperature $T$. 
	 \label{Defn:Energy}
\end{definition}

\begin{definition}
    The energy in tangent space is defined by replacing the effort-flow variables in definition of stored energy with that of  its derivatives. 
    \begin{equation}
	U_t = \int_0^t \frac{dS_f}{dt} \frac{dT}{dt} dt 
	\label{Eqn:EtEnergy}
	\end{equation}
\end{definition}

Irrespective of energy domain, each arrow in Fig. \ref{fig_HVACEnergyOLDetail} models the power variables defined next.	
\begin{definition}
	Given the generalized effort $e$ and flow variables $f$ at a port, the instantaneous power and rate of change of generalized reactive power absorbed by the port is defined as in Eqn. \eqref{Eqn_RealReacP} \cite{ilic2018multi,wyatt1990time}.
	\begin{subequations}
		\begin{align}
		P &= ef\\
		\dot{Q} &= e\frac{df}{dt} - \frac{de}{dt}f 
		\end{align}
	\end{subequations}
	\label{Eqn_RealReacP}
\end{definition}
We have defined here the thermal energy $U$ and thermal energy in tangent space $U_t$. The similar characterization for electro-mechanical energy domains is explained in \cite{ilic2018fundamental,ilic2018multi} and other energy domains in \cite{ilic2019exergy}. 
The total energy of the HVAC system $E$ is the sum of the stored energy of the sub-systems in AHU $E_a$ and thermal energy $U$.
We have shown in \cite{ilic2018multi} that the dynamics of energy exchanges of a component with the rest of the system can be captured by modeling the dynamics of aggregate variables energy $E$ and its rate of change $ p = \dot{E}$. 
The model was derived for electrical circuits and was proven to hold for complex electromechanical systems by extending the definitions of energy space variables through effort-flow analogy for multiple energy domains. 
In the same way, extending the analogy to the hydraulic energy domain with a continuum of space, we have derived models for turbo-machinery in this energy space in \cite{ilic2019exergy}. We now extend the modeling approach to also capture thermal interactions within the HVAC system. 

\subsection{Stand-alone HVAC model in energy space}
\label{Sec:HVACEnergyModel}
We now utilize the definitions of energy space variables to show that the previously proposed energy space model in \cite{ilic2018multi} also holds for thermal  interactions of HVAC.
\begin{proposition}
	The stand-alone HVAC component model shown in Figure \ref{fig_HVACEnergyOLDetail} with an added switch at the interface exhibits the following general dynamics in energy state space comprising the states $\left[E,p\right]$
	\begin{subequations}
	\begin{align}
	\dot{E} &= - \frac{E}{\tau} + P^{r,out} = p \label{Eqn_GeneralEnergyModel1}\\
	\dot{p} &=  4E_t -  \dot{Q}_u - \dot{Q}^{r,out}\label{Eqn_GeneralEnergyModel2}
	\end{align}
	\label{Eqn_GeneralEnergyModel}
\end{subequations}
Here, 
$\frac{E}{\tau}$ represents the total damping losses. $E_{t}$ is the total stored energy in tangent space. 
$\dot{Q}_u$ is the controllable reactive power injected by the switch. The resulting internal state dynamics result in instantaneous power and rate of reactive power absorption $P^{r,out}$ and $\dot{Q}^{r,out}$ respectively. 
\end{proposition}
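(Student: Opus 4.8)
The plan is to build the two-dimensional energy-space model directly from the port definitions in Definitions~\ref{Defn:Energy}--\ref{Eqn_RealReacP}, treating the zone and the Air Handling Unit as storage elements and the interface switch as a reactive-power source. First I would fix the effort--flow pairs: $(T,S_f)$ for the thermal domain of the zone, so that $U=\int_0^t T S_f\,dt$ and $P=T S_f$, together with the corresponding electromechanical pairs internal to the AHU, and then write the total stored energy as $E=E_a+U$ and the total tangent energy $E_t$ as the sum of the per-port tangent energies $\int \dot e_i \dot f_i\,dt$. The argument then splits into establishing the $\dot E$ relation, which is a first-law power balance, and the $\dot p$ relation, which is obtained by differentiating and invoking the reactive-power identity, mirroring the electrical derivation of~\cite{ilic2018multi} but carried through the thermal effort--flow analogy.

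For Eqn.~\eqref{Eqn_GeneralEnergyModel1} I would apply energy conservation to the aggregate component: the rate of change of the stored energy $E$ equals the net power $P^{r,out}$ absorbed at the interface minus the internal dissipation, which I lump into a single damping term $E/\tau$ with $\tau$ an effective dissipation time constant. This yields $\dot E=-E/\tau+P^{r,out}$ and, by definition, sets $p:=\dot E$, producing the first line.

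For Eqn.~\eqref{Eqn_GeneralEnergyModel2} I would differentiate $p=\dot E$ and work port by port using the internal power description $P_i=e_i f_i$. The product rule gives $\dot P_i=e_i\dot f_i+\dot e_i f_i$, while the antisymmetric combination is exactly the reactive power $\dot Q_i=e_i\dot f_i-\dot e_i f_i$, so $\dot P_i=2e_i\dot f_i-\dot Q_i$. The constitutive relation of a storage element makes the effort proportional to the rate of the flow, as $v=L\,di/dt$ in the electrical prototype, which forces $e_i\dot f_i=2E_{t,i}$; summing over the zone and AHU ports gives $\sum_i e_i\dot f_i=2E_t$ and hence an internal contribution $4E_t-\dot Q^{r,out}$. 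Reactive-power balance at the switch node then contributes the injected term $-\dot Q_u$, producing $\dot p=4E_t-\dot Q_u-\dot Q^{r,out}$.

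The main obstacle is the passage from the single-domain electrical result of~\cite{ilic2018multi} to the coupled multi-physics HVAC system. Specifically, I expect the delicate steps to be (i) justifying that the identity $e_i\dot f_i=2E_{t,i}$ and its aggregation $\sum_i e_i\dot f_i=2E_t$ survive when effort--flow pairs from the thermal entropy-flow domain and the electromechanical domain are combined and coupled through the AHU, and (ii) the careful bookkeeping that keeps the lumped loss $E/\tau$ and the switch-injected reactive power $\dot Q_u$ entering exactly as written, rather than introducing spurious cross terms; it is the entropy-flow effort--flow analogy built on Definitions~\ref{Defn:Energy} and~\ref{Eqn:EtEnergy} that ultimately makes this aggregation go through.
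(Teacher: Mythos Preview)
Your approach is quite different from the paper's. The paper does \emph{not} argue abstractly from port identities across multiple storage elements; instead it assumes the AHU stored energy is negligible and works entirely from the concrete single-zone model $C\dot T=-\tfrac{1}{R}(T-T_0)+P^{rated}u$. It writes explicit closed-form expressions $E=C_w(T-T_0)$, $E_t=\tfrac{C_w}{2(T-T_0)}\dot T^{2}$ and $\dot Q_T$ in terms of $(T,\dot T,\dot m_a,T_{sup})$, then verifies Eqn.~\eqref{Eqn_GeneralEnergyModel1} by direct substitution with $\tau=R/C_w$, and obtains Eqn.~\eqref{Eqn_GeneralEnergyModel2} by computing $\dot p=2E_t+C_w\ddot T$, expanding $C_w\ddot T$ with the model to reach $\dot p=4E_t+\dot Q_T$, and finally using the switch balance $\dot Q_T=-(\dot Q^{r,out}+\dot Q_u)$. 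So the paper's proof is a line-by-line calculation in the thermal state $T$, not an aggregation over generic effort--flow ports.

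Your abstract route is attractive but, as written, has a genuine gap at the step $e_i\dot f_i=2E_{t,i}$. You justify it with the inductor prototype $e=L\dot f$, yet the thermal zone here is capacitor-like: the paper in fact takes $S_f$ time-independent, so $e\dot f=(T-T_0)\dot S_f=0$ while $E_t\neq 0$. The correct identity in this domain is $\dot e\,f=2E_t$, which flips the sign of the reactive term and yields $\dot P=4E_t+\dot Q$ rather than $4E_t-\dot Q$; the minus sign in the final statement only appears after the switch bookkeeping $\dot Q_T=-\dot Q_{sw,R}=-(\dot Q^{r,out}+\dot Q_u)$. A second loose end is that you differentiate $p=\dot E$ without tracking the damping: since $p=-E/\tau+P^{r,out}$, the derivative $\dot p$ a priori contains $-\dot E/\tau$ in addition to $\dot P^{r,out}$, and you need to show explicitly how that piece is absorbed (in the paper this is hidden in the direct expansion of $C_w\ddot T$). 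If you want to keep the port-based argument, you should (i) state the constitutive relation actually used in each domain rather than defaulting to the inductor case, and (ii) carry the $E/\tau$ term through the differentiation rather than appealing to a generic $\dot P_i$ identity.
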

\begin{proof}
Starting from the model in Eqn. \eqref{Eqn_ConvModel}, we show that the aggregate energy space model holds in Appendix \ref{Sec:AppProof}. 
\end{proof}

\begin{remark}
	Here ${p} = P^{r,out}$ for when the damping losses $\frac{E}{\tau} = 0$. In such cases $p$ is the supplied power. If there were no switching reactive power injected,  $\int_0^t4 E_t dt$ is the power corresponding to the potential to do useful work while the 
	${Q}^{r,out}$
	represents component of power corresponding to wasted work \cite{ilic2018multi}. This wasted works is resulting from the dynamic inefficiencies associated with the energy conversion processes at instantaneous time. 
	It must be noted that this inefficiency is different from the damping losses from linear frictions, viscosity etc, which is captured in the term $\frac{E}{\tau}$. This dynamic inefficiency arises due to the difference in the rate at which thermal power gets absorbed by the zone and the rate at which the power is absorbed from the grid. 
	It is only when $2 \dot{Q}^{r,out}$ approaches $\dot{Q}_T$, maximum physical efficiency is achieved. 
\end{remark}

\subsection{Validation of energy model using real data} 
In order to keep the analytical derivations simple, we have assumed in Proposition 1 negligible stored energy in AHU when compared to that of the zone. However, the energy space model postulated holds even if this assumption were relaxed. 
To this end, we utilize real-world measurements of a detailed HVAC system to validate our proposed energy model. 

We performed a model validation experiment in \cite{wu2021validity} where the interface variables of chiller and pumps in compression system and that of fans and zones were used to conduct parameter estimation of the respective conventional state space models. These parameters are the inertial and damping coefficients of each of the components within a HVAC system. 
These parameters were used to compute the overall stored energy $E$, damping $\frac{E}{\tau}$ and the stored energy in tangent space, given the measurements of internal states (temperature of zone and chiller in heater/cooler units, air flow rate in the fan). The detailed definition of these energy variables is outside the scope of this paper. 
The reader is referred to \cite{wu2021validity}. 
We further utilized the measurements of voltages and currents at electrical terminals of compressors and fans to obtain instantaneous and rate of reactive power using Eqn. \eqref{Eqn_RealReacP}.

Shown in Figure \ref{Fig:ValidationZoomedOut} is the perfect match in the comparison plots of left hand side and right hand side of the proposed energy model of Eqn. \eqref{Eqn_GeneralEnergyModel} over longer timescales. 
Shown in Figure \ref{Fig:ValidationZoomedIn} are the same plots zoomed-in over finer time granularity. 

\begin{figure}[!htbp]
	\centering
	\subfigure[Zoomed-out Eqn. \eqref{Eqn_GeneralEnergyModel1}]{\includegraphics[width=0.785\linewidth]{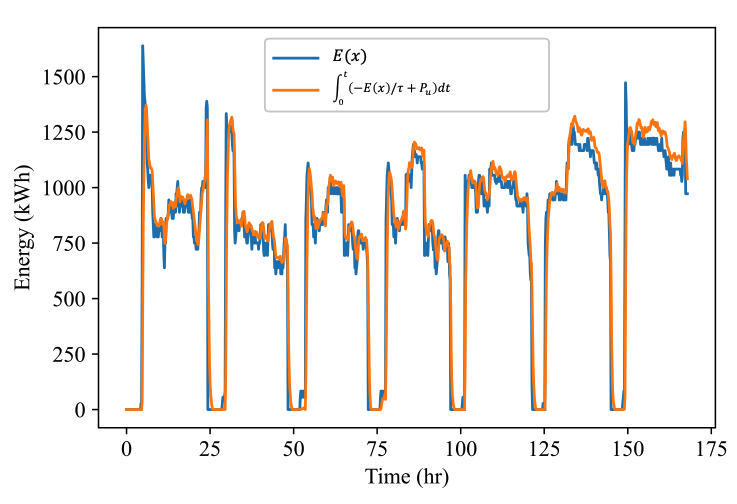}\label{Fig:ValidationZoomedOut1} }
	\subfigure[Zoomed-out Eqn. \eqref{Eqn_GeneralEnergyModel2}]{\includegraphics[width=0.785\linewidth]{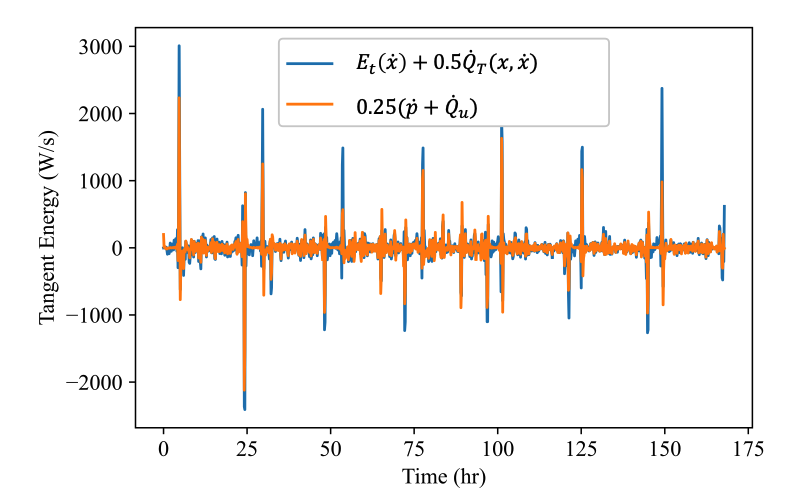}\label{Fig:ValidationZoomedOut2}}
	\caption{Zoomed-out validation of unified energy model using real-world data}
	\label{Fig:ValidationZoomedOut}
\end{figure}

\begin{figure}[!htbp]
	\centering
	\subfigure[Zoomed-in Eqn. \eqref{Eqn_GeneralEnergyModel1}]{\includegraphics[width=0.785\linewidth]{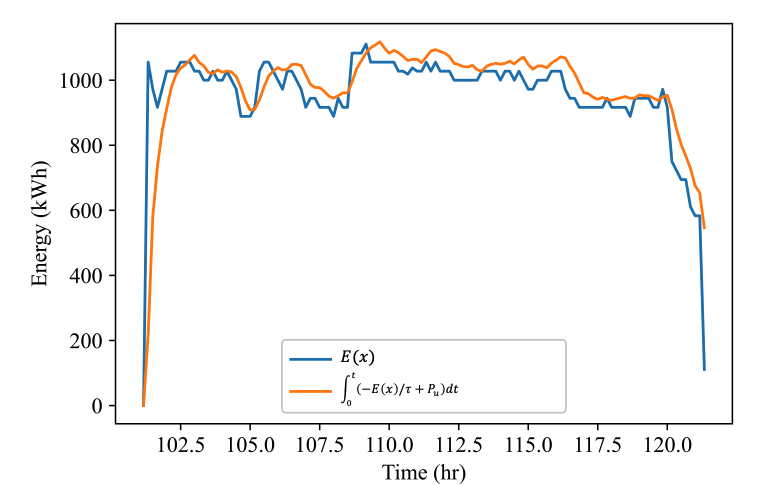}\label{Fig:ValidationZoomedIn1} }
	\subfigure[Zoomed-in Eqn. \eqref{Eqn_GeneralEnergyModel2}]{\includegraphics[width=0.785\linewidth]{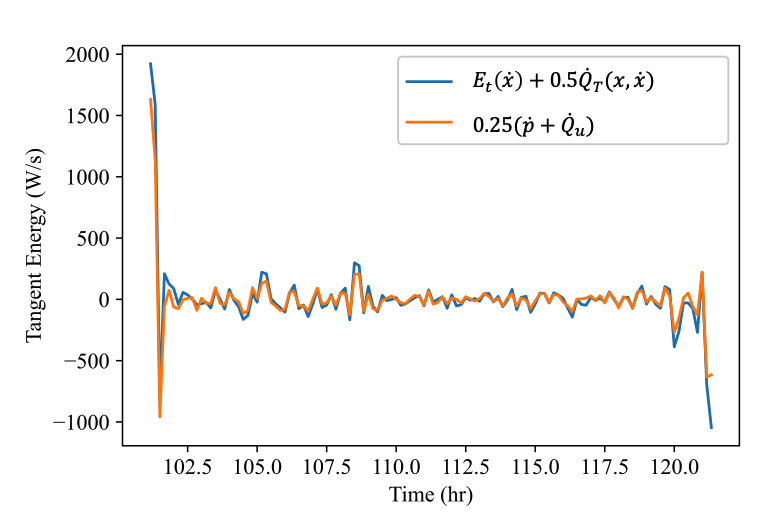}\label{Fig:ValidationZoomedIn2}}
	\caption{Zoomed-in validation of unified energy model using real-world data}
	\label{Fig:ValidationZoomedIn}
\end{figure}


\subsection{Interactive interconnected model}
There exists a unique transformation of state and state derivatives called interaction variable $z^{r,out}$ that has important structural properties. The rate of change of interaction variable is zero when disconnected from the grid. For details, see \cite{jaddivada2020unified}. 
These outgoing interaction variables are defined as: 
\begin{equation}
    	\dot{z}^{r,out} = \left[ \begin{array}{l}
{P^{r,out}}\\
{{\dot{Q}}^{r,out}}
\end{array} \right] = \left[ \begin{array}{l}
p + \frac{E}{\tau }\\
4{E_t} - \dot{Q}_u- \dot p
\end{array} \right] = \phi(\tilde{x},\dot{\tilde{x}})
\label{Eqn:OutIntVar}
\end{equation}
The definitions for $p$, $E_t$, $\dot{Q}_T$ for the simple dynamical model in Eqn. \eqref{Eqn_ConvModel} are introduced in Appendix \ref{Sec:AppProof}. As a result the outgoing interaction variable is abstracted as a function of extended states and its derivatives which is defined as $\tilde{x} = \left[x,u\right]^T$. Here $x$ is a vector of all HVAC state variables and $u$ is the switch control introduced for ancillary service participation in Figure \ref{fig_HVACEnergyOLDetail}. 

On the other hand, there is also incoming interactions into the HVAC from the electric grid due to the energy conversion dynamics of rest of the components connected to the grid i.e.
\begin{equation}
 \dot{z}^{r,in} = [P^{r,in}, \dot{Q}^{r,in}]^T 
\end{equation}
$\dot{z}^{r,out}$ is created through thermal processes, while $\dot{z}^{r,in}$ is the result of faster electric processes on the grid side. 
These two quantities need to align with each other for system interconnection to be stable and feasible \cite{ilic2018fundamental,jaddivada2020unified}. 
During transients this difference is not zero as shown for the instantaneous power component with conventional control design in Figure \ref{fig_HVACPowerImbalance}.
Driving this imbalance to zero is set as the primary control objective as will be explained in the next section. 
Notably, the energy space model in Eqn. \eqref{Eqn_GeneralEnergyModel} is linear in terms of the energy states $x_z = \left[E,p\right]^T$ where the control input in energy state space is $u_z = \dot{Q}_u$ and the term $\left(4E_t - \dot{Q}^{r,out}\right)$ is the state-dependent disturbance entering the model. Setting bounds on this disturbance allows us to perform linear control design in energy state space with provable design. 


\section{Proposed multi-layered energy-based control}
\label{Section:Control}
We exploit the linear nature of energy model to first design a provable primary control that addresses all the stated objectives. Shown in Figure \ref{fig:HVAC_Prop} is the proposed multi-layered control architecture based on the timescales illustrated in Figure \ref{fig:TimeScales}.
\begin{figure*}
	\centering
	\includegraphics[width= 0.9 \linewidth]{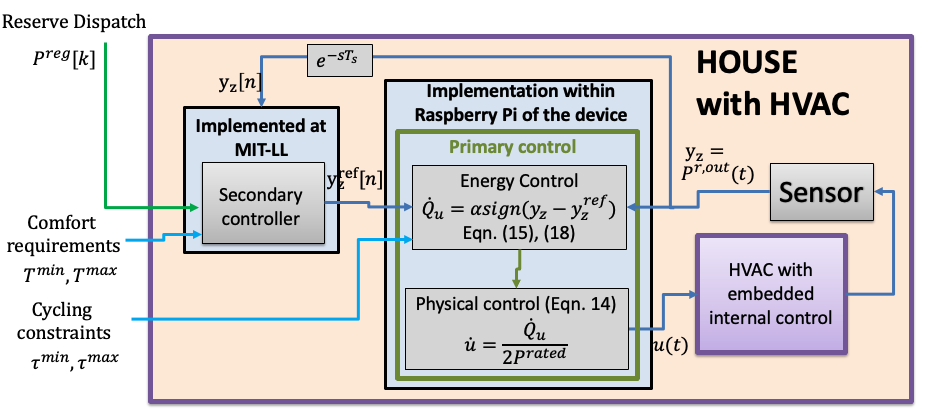}
	\caption{Proposed multi-layered information exchange for control of HVAC. Interface power measurements are utilized for sliding mode energy control at continuous timescales to track a reference value $y_z^{ref}[n]$ obtained over discrete $T_s$ timesteps such that the resulting HVAC power adjustments could track a given regulation signals over even longer $T_t$ timescale.}
	\label{fig:HVAC_Prop}
\end{figure*}

In particular, at fastest timescales, we design a sliding mode control in energy state space that relies on interface power measurements alone. 
The sliding mode design inherently allows for the design to be robust to model and parameter uncertainties. 
We further introduce an error threshold for the sliding mode design to explicitly incorporate the compressor cycling constraints. We can thus satisfy \textbf{objectives 1 and 3}. 

In context of the timescales illustrated in Figure \ref{fig:TimeScales}, we implement the the sliding mode design to have a carefully selected output variable in energy state space $y_z(t)$ track a consistence reference value $y_z^{ref}[n]$ varying at minutes timescale. 
Upon application of such design, it is possible to characterize a quasi-static droop relations between output of interest in energy state space, its reference value and the 
averaged power absorbed by the HVAC unit over minutes timescale. 
This reference value is the secondary control action designed in an interactive manner by accounting for the closed loop HVAC dynamics with the primary control and the thermal comfort constraints. As a result, we satisfy \textbf{objective 2}  over seconds timescale while also satisfying \textbf{objective 4} over minutes timescales. 

The sliding mode design over fastest timescales together with model predictive control over slower timescales allow us to ensure stability, regulations and incorporation of ARPA-E performance metrics stated in \textbf{Objective 5} in a provable manner. Meeting the performance metrics, especially the RMT depends on how large the target is, the thermal load of the HVAC unit and the number of HVAC units available. With our proposed approach, it is possible to atleast identify the extent to which the reserve magnitude target can be met apriori while satisfying all other performance metrics. This allows for bringing in additional HVAC units to thereby satisfy the reserve magnitude target as needed. 
Each of these steps will be explained in more detail in the subsequent sections. 


\section{Provable primary control}
\label{Sec:PrimaryControl}
\subsection{Control design in energy state space}
HVAC systems already have an embedded proprietary digital control. We propose to overwrite these signals through a robust primary control based on sliding mode energy control design.
Consider the energy space model in Eqn. \eqref{Eqn_GeneralEnergyModel} and let $\dot{Q}_u$ be the available degree of control and let us select the output variable of interest as in Eqn. \eqref{Eqn:Output}. 
\begin{subequations}
\begin{equation}
    y_z = P^{r,out}\left(E,p=0\right) = \frac{E}{\tau}= 
    \frac{1}{R}\left(T-T_0\right) \label{Eqn:Output}
\end{equation}
This quantity represents the device's instantaneous power absorption after the settling of stored energy dynamics. 
We set the reference value 
as follows: 
\begin{align}
    &y_z^{ref} = P^{r,in,ref} = P^{r,out,ref} + \Delta P^{r,out}[n]\notag\\ &=\frac{1}{R}\left(T^{ref} - T_0\right) + \Delta P^{r,out}[n]  \forall t \in [(n-1)T_s, (nT_s)]\\
    \label{Eqn:yzRef}
\end{align}
Here, the first term is the desired baseline power consumption to satisfy the thermal comfort requirements while the second term indicates the power consumption adjustments for participation in ancillary services. 
$y_z^{ref}[n]$ is the feed-forward signal that enters every $T_s$ timestep. The fast sliding mode design is expected to ensure $y_z \rightarrow y_z^{ref}$ before $t = nT_s$. 
To this end, we design a robust sliding mode control as in Eqn. \eqref{Eqn:ControlEnergy} that ensures finite settling time. 
\begin{equation}
    \dot{Q}_u = \alpha sign\left(\underbrace{y_z - y_z^{ref}}_{\sigma}\right)
    \label{Eqn:ControlEnergy}
\end{equation}
Here, $\sigma = y_{z} - y_z^{ref}$ represents the sliding surface, and $\alpha$ is the sliding mode gain. 
\label{Eqn:ControlContinuousTime}
\end{subequations}

\subsection{Physical control mapping}
Let us now zoom-in to the switch in Figure \ref{fig_HVACEnergyOLDetail} and characterize the reactive power flows as shown in Figure \ref{Fig:QdotSwitch}. Let us denote 
left hand side and right-hand side reactive power flows as $\dot{Q}_{sw,L},\dot{Q}_{sw,R}$. The switch on left hand side interacts with the electrical grid while that on right side interacts with the AHU and zone of the HVAC component. 
\begin{figure}[!htbp]
	\centering
	{\includegraphics[width=0.9\linewidth]{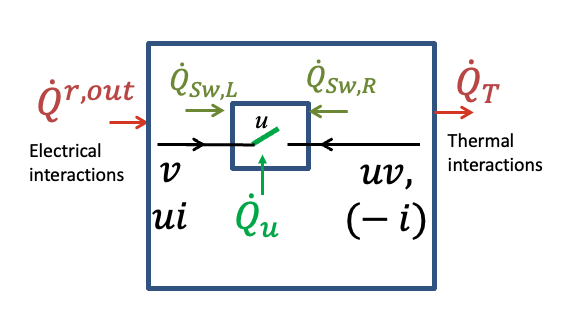} }
	\caption{Depiction of reactive power flows at the interfaces of the switch considered in Figure \ref{fig_HVACEnergyOLDetail}. 
	}
	\label{Fig:QdotSwitch}
\end{figure}
Given the electrical effort flow variables $\left(v,i\right)$ modulated through the switch $u$ at left and right hand side ports of the switch, we can utilize the definition of reactive power in Eqn. \ref{Eqn_RealReacP} to obtain the following expressions. 
\begin{subequations}
\begin{align}
    \dot{Q}_{u} &= \dot{Q}_{sw,L} - \dot{Q}_{sw,R}\\
    & = \left(v \frac{d}{dt}\left(ui\right) - \left(ui\right) \frac{dv}{dt}\right) 
     - \left(\left(uv\right) \frac{di}{dt} - i\frac{d}{dt}\left(uv\right)\right)\\
    & = 2 vi \dot{u} = 2P^{rated} \dot{u}
\end{align}
Here, $P^{rated}$ is the power flowing through the switch which is the same as the one defined in Eqn. \eqref{Eqn_ConvModel}. 
\label{Eqn:QudotDerivation}
\end{subequations}
Rearranging the terms in Eqn. \eqref{Eqn:QudotDerivation}, we obtain the dynamical switching control expression in Eqn. \eqref{Eqn:CtrlDynamics}. 
\begin{equation}
    \dot{u} = \frac{1}{2 P^{rated}} \dot{Q}_u 
    \label{Eqn:CtrlDynamics}
\end{equation}

\begin{theorem}
The closed loop model (Eqn. \eqref{Eqn_ConvModel}, \eqref{Eqn:CtrlDynamics}) exhibits following properties
\begin{enumerate}
    \item It is diffeomorphic to the energy state space model in Eqn. \eqref{Eqn_GeneralEnergyModel} 
    \item The virtual control design in Eqn. \eqref{Eqn:ControlEnergy} ensures finite settling time if the sliding mode gain is set equal to 
    $\alpha = \overline{L} + K$ where $\left|4E_t  - \dot{Q}^{r,out}\right| \le \overline{L}$, $ K>0$
    \item The objective of $y_z \rightarrow y_z^{ref}$ is achieved within a finite reaching time  $t_r \le \frac{K}{2|\sigma(0)|}$
    where $\left|\sigma(0)\right|$ represents the distance of operating point from the sliding surface at initial time. 
\end{enumerate}
\end{theorem}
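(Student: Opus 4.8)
The plan is to treat the three claims in sequence, since the first supplies the coordinates in which the latter two must be analyzed.

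For Part 1, I would exhibit an explicit change of coordinates $\Phi:(T,u)\mapsto(E,p)$ and verify it is a diffeomorphism via the inverse function theorem. The relation $y_z=E/\tau=\tfrac1R(T-T_0)$ in \eqref{Eqn:Output} fixes $E$ as a function of $T$, and $p=\dot E$ is then obtained by differentiating along \eqref{Eqn_ConvModel}, so that $p$ becomes a function of $(T,u)$ in which $u$ enters through the $P^{rated}u$ forcing term. I would compute the Jacobian $D\Phi$, observe that its determinant is a nonzero constant proportional to $\tau^2 P^{rated}/(R^2C)$ on the operating domain, and conclude that $\Phi$ is smooth with a smooth inverse. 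It then remains to push the physical vector field of \eqref{Eqn_ConvModel} and \eqref{Eqn:CtrlDynamics} through $\Phi$ and to check term-by-term that it reproduces \eqref{Eqn_GeneralEnergyModel}; here the control mapping $\dot u=\tfrac{1}{2P^{rated}}\dot Q_u$ is what converts the physical switching rate into the $\dot Q_u$ appearing in \eqref{Eqn_GeneralEnergyModel2}, while the residual state-dependent terms are collected into $4E_t-\dot Q^{r,out}$.

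For Parts 2 and 3, I would run the standard first-order sliding-mode reaching argument with the Lyapunov candidate $V=\tfrac12\sigma^2$, $\sigma=y_z-y_z^{ref}$. Because $y_z^{ref}$ is held piecewise constant over each $T_s$ interval, $\dot\sigma$ reduces to the closed-loop output rate, into which the relay $\dot Q_u=\alpha\,\mathrm{sign}(\sigma)$ enters with a definite sign while $4E_t-\dot Q^{r,out}$ acts as a matched disturbance bounded by $\overline{L}$. Substituting the control and choosing $\alpha=\overline{L}+K$ then yields a reaching inequality $\dot V=\sigma\dot\sigma\le-K|\sigma|$, i.e. $\tfrac{d}{dt}|\sigma|\le-K$; integrating from $|\sigma(0)|$ gives convergence of $\sigma$ to zero in a finite reaching time on the order of $|\sigma(0)|/K$, after which the trajectory is confined to $\sigma=0$ and $y_z\to y_z^{ref}$ is enforced.

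The main obstacle is reconciling the relative degree of the chosen output with the first-order relay. As literally defined in \eqref{Eqn:Output}, $y_z=E/\tau$ depends on $E$ alone, so by \eqref{Eqn_GeneralEnergyModel1} one has $\dot\sigma=p/\tau$ and the control $\dot Q_u$ does not appear until $\ddot\sigma$; a bang-bang law on such a relative-degree-two output produces sustained oscillation rather than finite-time convergence, and the reaching inequality above cannot be obtained directly. I would resolve this by tracking the actual absorbed power $P^{r,out}=p+E/\tau$, equivalently augmenting the sliding surface with $p$, so that $\dot Q_u$ enters $\dot\sigma$ at first order with the stabilizing sign and the disturbance is enlarged to $4E_t-\dot Q^{r,out}+p/\tau$; a second-order (twisting) argument would be the alternative. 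The remaining delicate point is establishing the uniform bound $|4E_t-\dot Q^{r,out}|\le\overline{L}$, since this quantity is state-dependent and its constituent terms partly co-vary with the control through the mapping \eqref{Eqn:CtrlDynamics}; bounding the tangent-space energy $E_t$ and the reactive-power rate over the admissible comfort band $T\in[T^{ref}-T_{db},\,T^{ref}+T_{db}]$, together with this relative-degree reconciliation, is where I expect the real work to lie.
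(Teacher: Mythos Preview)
Your plan coincides with the paper's proof in overall structure: Part~1 is handled there by exhibiting the explicit map $(T,u)\mapsto(E,p)$ with $E=C_w(T-T_0)$ and $p=-\tfrac{1}{R}(T-T_0)+P^{rated}u$ and noting it is one-to-one; Parts~2--3 use the same candidate $V=\tfrac12\sigma^2$ and the standard reaching inequality $\dot V\le -K|\sigma|$, integrated to a finite reaching time. (The bound $t_r\le K/(2|\sigma(0)|)$ in the theorem statement is evidently inverted; the paper's proof actually concludes $t_r\le |\sigma(0)|/(\sqrt{2}K)$, in line with your $|\sigma(0)|/K$ up to a constant.)

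The relative-degree obstruction you flag is exactly where your proposal and the paper part ways. The paper does \emph{not} augment the surface or pass to a second-order argument; it instead asserts the chain
\[
\dot\sigma \;=\; -\tfrac{1}{R}\dot T - \dot P_u \;=\; -\tfrac{d}{dt}\Bigl(\tfrac{1}{R}(T-T_0)+P_u\Bigr) \;=\; -C_w\dot T \;=\; -\dot p \;=\; -\bigl(4E_t+2\dot Q_T\bigr)+\dot Q_u,
\]
after which $\dot Q_u$ sits in $\dot\sigma$ at first order and the reaching inequality follows immediately. Your suspicion is warranted: with $y_z=E/\tau=\tfrac1R(T-T_0)$ and $y_z^{ref}$ piecewise constant, one has $\dot\sigma=\tfrac1R\dot T=p/\tau$, not $-\dot p$, and several of the equalities in the displayed chain are not consistent with the model (for instance $\tfrac1R(T-T_0)+P_u\ne C_w\dot T$, and $-C_w\dot T\ne -\dot p$ since $p=C_w\dot T$). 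So the paper effectively sidesteps the relative-degree issue via a computation whose intermediate steps do not hold as written. Your proposed repair---taking the tracked output to be $P^{r,out}=p+E/\tau$ so that $\dot Q_u$ enters $\dot\sigma$ directly, with the disturbance enlarged to $4E_t-\dot Q^{r,out}+p/\tau$---is the clean way to make the first-order reaching argument go through and would recover the paper's stated conclusion with a slightly modified $\overline{L}$.
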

\begin{proof}
The proof is provided in the Appendix \ref{Sec:ProofTheorem}. 
\end{proof}
The finite reaching time result of the sliding mode control is an important result. In particular, the sliding mode gain can be chosen to ensure the reaching time $t_r$ is at least ten times smaller than secondary control timescale $T_s$. We can further select $T_s$ to be equal to the ramp time $t_{ra}$ stated in Objective 5. 

For chosen values of $T^{ref} = 75 ^0$ F, $\Delta P^{r,out}[n] = 0$  and time varying ambient  temperature changes, 
we obtain finite settling time of the temperature to the reference value as shown in Figure \ref{fig:SimulinkSim1}.
The simulation here has utilized 
a more detailed residential HVAC simulation model with AHU dynamics simulated and with imperfect parameter corresponding to the damping losses being utilized. 
Notice that the the energy control design utilizes measurements of just the interface electrical power measurements and the zonal temperature values which are typically available in the existing residential smart meters and thermostats. 

\begin{figure}[!htbp]
	\centering
	\includegraphics[width=  \linewidth]{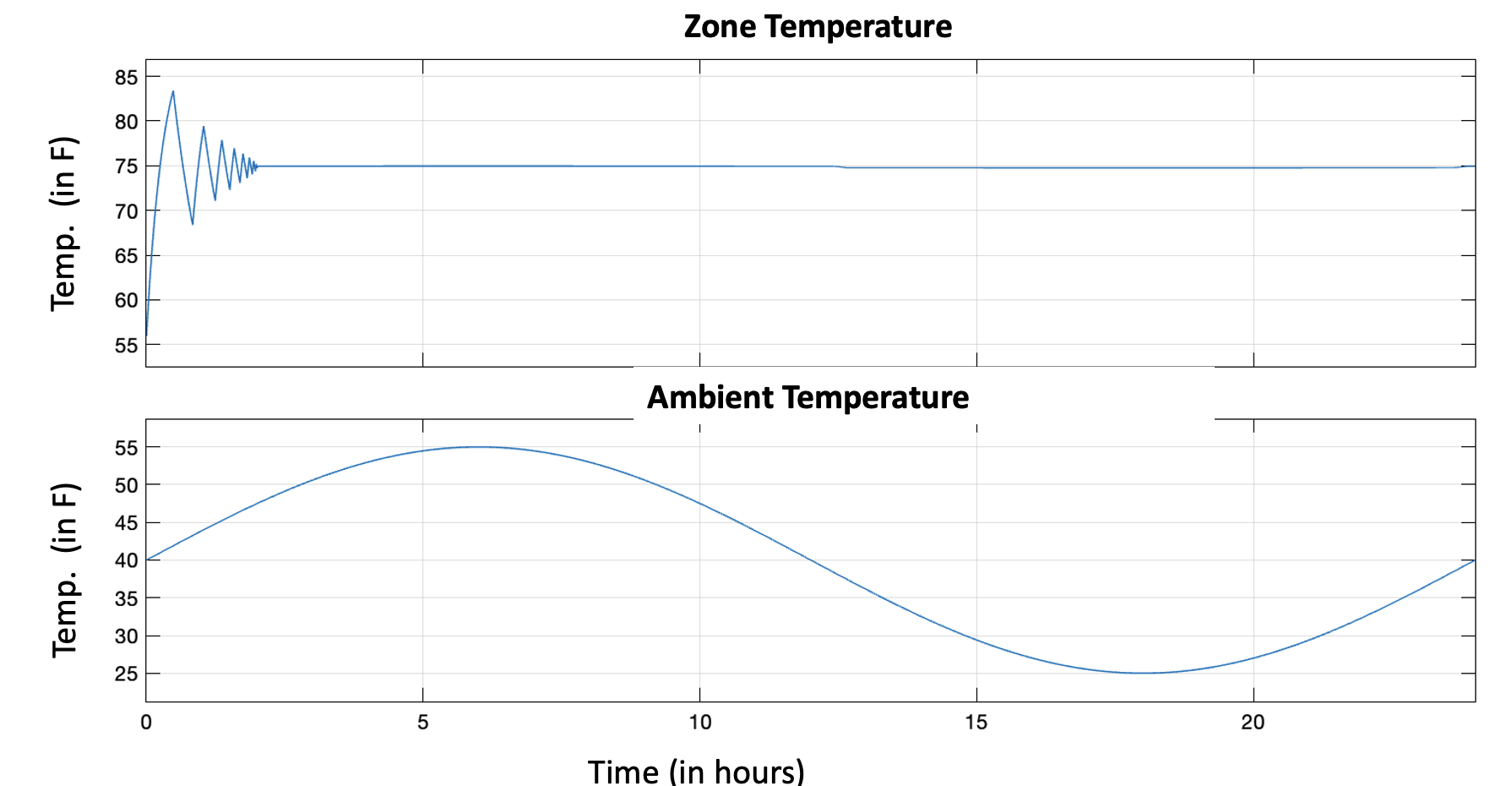}
	\caption{Temperature trajectories obtained with energy control applied to a detailed HVAC simulation model}
	\label{fig:SimulinkSim1}
\end{figure}

\subsection{Implementation-aware design}
\label{Sec:DiscreteImplementation}
The energy control in Eqn. \eqref{Eqn:ControlEnergy} together with the design implementation in Eqn. \eqref{Eqn:CtrlDynamics} may lead to frequent switching which may lead to violation of the compressor cycling constraints. 
In order to accommodate the stated objective 3 with the compressor ON-time constraint of $\tau^{min} \le \tau_{ON} \le \tau^{max}$, we incorporate a small threshold for the sliding surface leading to the design as 
\begin{equation}
        \begin{array}{l}
\dot{Q}_u ~= \alpha \qquad y_z \ge y_z^{ref} +  y_z^{+}\left(\tau^{max}\right)\\
\quad \quad  = -\alpha \qquad y_z \le y_z^{ref} + y_z^{-}\left(\tau^{min}\right)\\
\qquad  = 0\quad \qquad {\rm{otherwise}}
\end{array}
\label{Eqn:QudotThreshold}
\end{equation}
The thresholds $y_z^+$ and $y_z^-$ are computed so that the temperature excursion is between tolerable temperature bounds  $\left[T^{min},T^{max}\right]$ as permitted by the compressor cycling constraints. 
The ON time based on the thermal dynamics are derived to be related to the minimum and maximum temperature variations in \cite{lu2012evaluation} as
\begin{equation}
    \tau_{ON} = RC log\left(\frac{T^{max} - T_0 + R y_z}{T^{min} - T_0 + R y_z}\right)
\label{Eqn_Duration}
\end{equation}
Rearranging terms and assuming $ \frac{\tau_{ON}}{RC} << 1$ we obtain
\begin{equation}
    y_z = \frac{1}{R}\left(T^{min} - T0\right)\left(\frac{\tau_{ON}}{{RC}}\right) - \frac{1}{R}\left(T^{max} - T^{min}\right)
\end{equation}
With the limits of $\tau_{ON}$ given, we can obtain the permissible bounds on $y_z$ as 
\begin{subequations}
\begin{align}
    y_z^+ &= \frac{1}{R}\left(T^{min} - T0\right)\left(\frac{\tau^{max}}{{RC}}\right) - \frac{1}{R}\left(T^{max} - T^{min}\right)\\
    y_z^- &= \frac{1}{R}\left(T^{min} - T0\right)\left(\frac{\tau^{min}}{{RC}}\right) - \frac{1}{R}\left(T^{max} - T^{min}\right)
\end{align}
\label{Eqn:SigmaBound}
\end{subequations}

The responses for the same ambient temperature in Figure \ref{fig:SimulinkSim1} with the sliding mode energy control of Eqn. \eqref{Eqn:QudotThreshold} with the thresholds defined in Eqn. \eqref{Eqn:SigmaBound} with $\tau^{min} = 5$ minutes and $\tau^{max} = 15$ minutes  is shown in Figure \ref{fig:SimulinkSim2}. If the values of $R$ abd $C$ are not exactly known, we can further utilize conservative bounds in Eqn. \eqref{Eqn:SigmaBound}.
The implementation with  \eqref{Eqn:QudotThreshold} on an average corresponds to an equivalent continuous time signal of virtual control $\dot{Q}_u$ in Eqn. \eqref{Eqn:ControlEnergy} over longer timescales.

\begin{figure}[!htbp]
	\centering
	\includegraphics[width=  \linewidth]{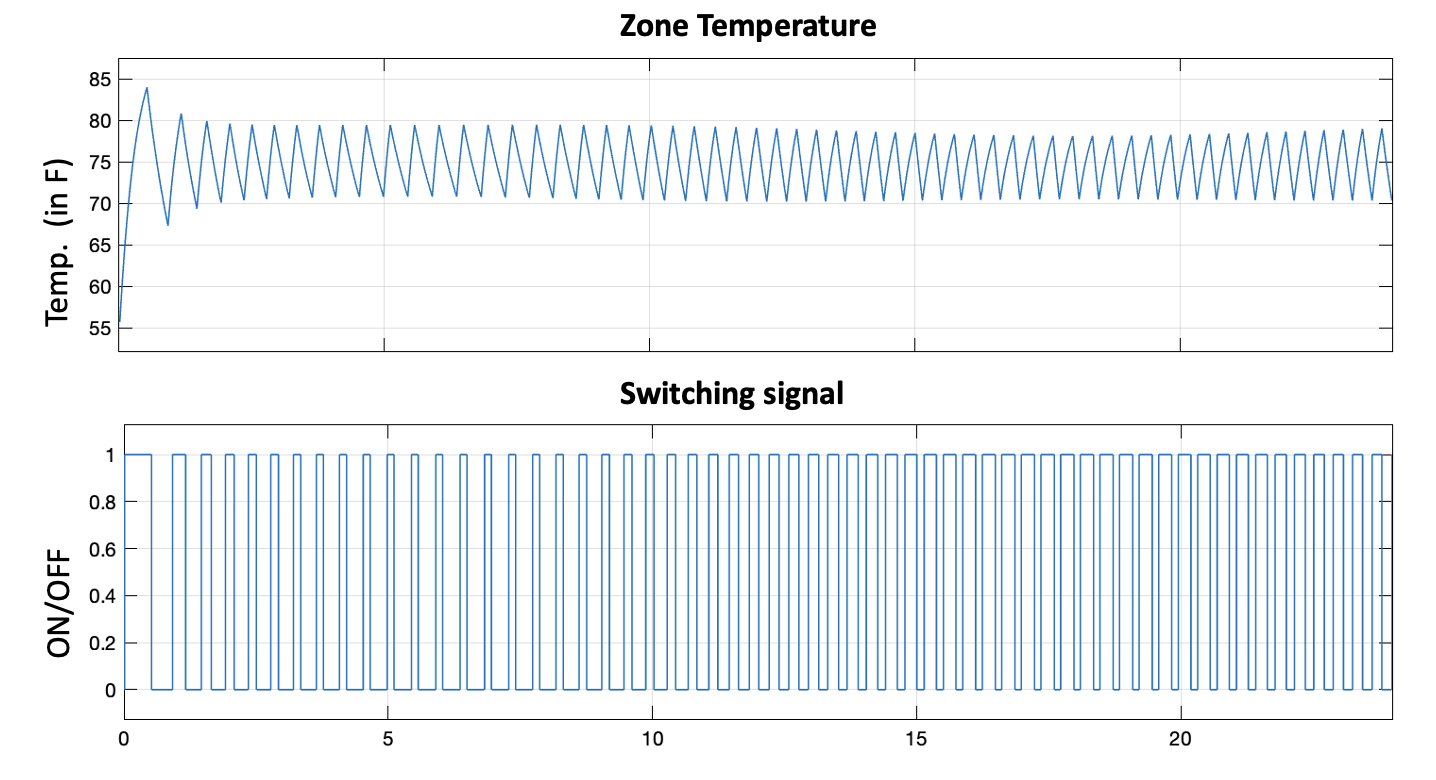}
	\caption{Temperature trajectories obtained with implementation-aware energy control applied to a detailed HVAC simulation model with compressor cycling constraints considered}
	\label{fig:SimulinkSim2}
\end{figure}

\subsection{Step-response for non-zero adjustment signal}
Next, 
we assess
if a non-zero step signal $\Delta P^{r,out}[n] = +0.2 kW$ used in Eqn. \eqref{Eqn:yzRef}
can be chased by our proposed sliding mode energy control. We activate this step signal at $t= 0.5$ hours. 

The resulting plots are shown in Fig. \ref{Fig:PrimaryCtrly} with perfect tracking performance achieved.
Notice that the response time of less than 5 seconds is achieved because of the implementation timestep chosen as $0.001$ seconds. The ramp time is clearly less than 5 minutes, which can further be improved with selection of sliding mode gain $\alpha$ utilized in Eqn. \eqref{Eqn:ControlContinuousTime}. The reserve magnitude variability tolerance of less than 5\% of the target is also achieved as is evident from the figure. Furthermore, the duration of $30$ minutes of reserves availability has also been validated.

The electrical power is also shown in Fig. \ref{Fig:PrimaryCtrlP}
Here, the base signal utilized for comparison is the one when the embedded automation does not respond to the regulation signal, i.e., the step-change in $y_z^{ref} [n]$) at 0.5 hours. After $0.5$ hours, notice that there is a difference of approximately $0.2$ kW as required by the regulation reserve signal. 

\begin{figure}[!htbp]
	\centering
	\subfigure[Output of interest $y_z$]{\includegraphics[width=0.485\linewidth]{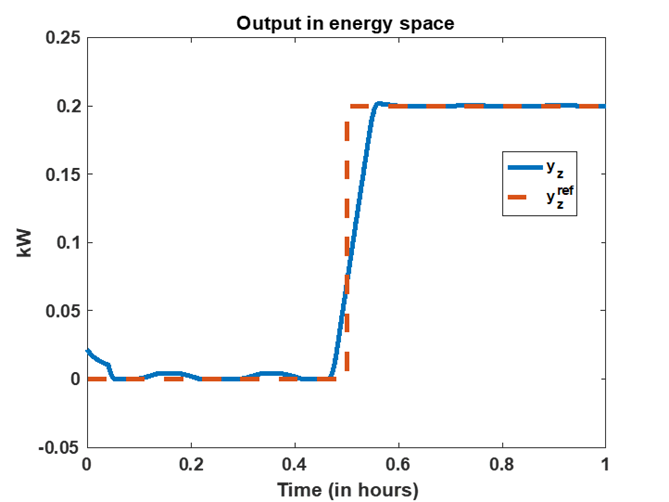}\label{Fig:PrimaryCtrly} }
	\subfigure[Averaged power consumption of HVAC with and without demand response]{\includegraphics[width=0.485\linewidth]{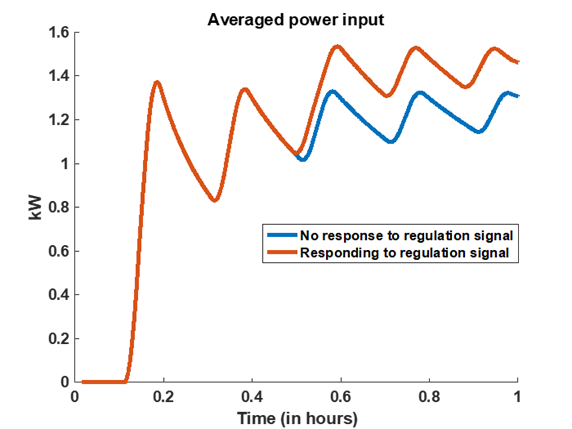}\label{Fig:PrimaryCtrlP}}
	\subfigure[Temperature $T$ of the zone $y_z$]{\includegraphics[width=0.485\linewidth]{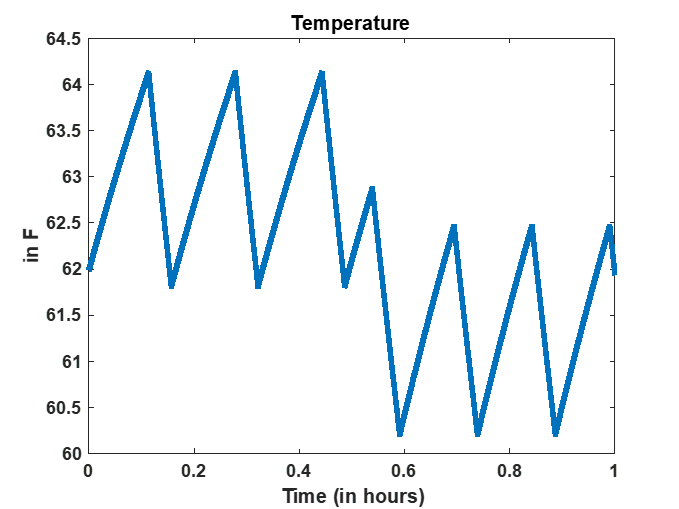}\label{Fig:PrimaryCtrlT} }
	\subfigure[Switch position $u$]{\includegraphics[width=0.485\linewidth]{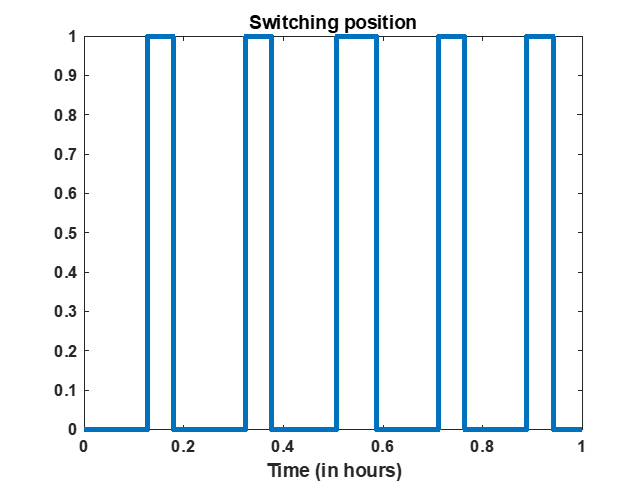}\label{Fig:PrimaryCtrlu}}
	\caption{HVAC primary control tracking output in energy space $y_z$, to the value  $y_z^{ref}$ provided by the secondary layer control}
	\label{Fig:PrimaryCtrlPower}
\end{figure}

The temperature is also ensured to be within the permissible range of $60-65 ^0 F$ as shown in Figure \ref{Fig:PrimaryCtrlT}. Finally, the imposition of the deadband in Eqn. \eqref{Eqn:QudotThreshold} of the control also ensures that the switching actions are not too frequent, as shown in Figure \ref{Fig:PrimaryCtrlu}.

We have thereby reassessed objective 5 in context of the unit testing of a single device for step change in regulation signal. Here 
we have assumed that consistent reference signal $y_z^{ref}$ is provided so that the comfort metrics are not violated. 

In order to explicitly accommodate the thermal comfort constraints, we pose a secondary control problem that dictates the adjustment signal $\Delta P^{r,out}[n]$ over faster timescale so that the step change in regulation signal of $0.2 kW$ can be better tracked over longer time horizons.  
In order to ensure that, we pose a secondary control problem formulation over relatively slower timescales.  

\section{ Efficient secondary layer control}
\label{Sec:SecondaryControl}
With the primary control explained in Section \ref{Sec:PrimaryControl}, we have ensured that the tracking of $y_z$ to $y_z^{ref}$ happens within finite time.
Once the primary dynamics settle, we can derive a quasi-static 
droop relation as stated next. 

\begin{proposition}
    Given the closed loop model in Eqn. \eqref{Eqn:ControlContinuousTime}
    there exists a 
   three-way incremental droop relation between the output in energy state space $\Delta y_z [n]$, its reference value 
   $\Delta y_z^{ref} [n]$ and the
   the power consumption $\Delta P^{r,out} [n]$  as shown in Eqn. \eqref{Eqn:Droop} where $\sigma$ is called the droop constant which is primary control dependent. 
 \begin{equation}
    \Delta y_z[n] = \left(1 -\sigma \right) \Delta y_z^{ref}[n] -\sigma\Delta P^{r,out}[n]
    \label{Eqn:Droop}
\end{equation}
\end{proposition}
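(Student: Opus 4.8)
The plan is to obtain the droop relation \eqref{Eqn:Droop} as the window-averaged, quasi-static input--output law of the primary closed loop \eqref{Eqn:ControlContinuousTime}, in which $\Delta y_z^{ref}$ is the set point and the absorbed power $\Delta P^{r,out}$ plays the role of the load. The first ingredient is purely kinematic: combining $\dot E=p$ and $\dot E=-E/\tau+P^{r,out}$ from \eqref{Eqn_GeneralEnergyModel1} with the output definition $y_z=E/\tau$ in \eqref{Eqn:Output} gives the first-order link $P^{r,out}=\tau\dot y_z+y_z$. Integrating this over one secondary step $[(n-1)T_s,nT_s]$ and dividing by $T_s$ yields the exact discrete balance $\bar P^{r,out}[n]=(\tau/T_s)\,(y_z[n]-y_z[n-1])+\bar y_z[n]$, where bars denote window averages and bracketed symbols are endpoint samples. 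The stored-energy term $(\tau/T_s)(y_z[n]-y_z[n-1])$ is what keeps the averaged absorbed power distinct from the settled output within a step, and is therefore essential: at a true equilibrium it vanishes and the three quantities would collapse to one.

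The second ingredient uses the finite-reaching-time guarantee of the preceding theorem. Because the sliding-mode gain is chosen so that $t_r$ is an order of magnitude below $T_s$, within each step $y_z$ reaches and then rides the surface $y_z-y_z^{ref}=0$, but only up to the boundary-layer offset forced by the deadband \eqref{Eqn:QudotThreshold}; write this as $y_z[n]=y_z^{ref}[n]-\delta[n]$. The cleanest way to organize the algebra is to read the closed loop as unity feedback around the DC-unity plant $(1+\tau s)^{-1}$: its complementary sensitivity $T(0)$ and sensitivity $S(0)$ satisfy $S(0)+T(0)=1$, and the finite DC loop gain $L(0)$ caused by the deadband makes $S(0)>0$. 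Identifying the droop constant with the sensitivity, $\sigma:=S(0)=1/(1+L(0))$, the offset takes the form $\delta[n]=\sigma\,(\Delta y_z^{ref}[n]+\Delta P^{r,out}[n])$, which is exactly the combination needed. Substituting the tracking relation $y_z[n]=y_z^{ref}[n]-\delta[n]$ into the discrete balance under a step-and-hold reference and collecting the coefficients of $\Delta y_z^{ref}[n]$ and $\Delta P^{r,out}[n]$ returns the affine form \eqref{Eqn:Droop}, with a single scalar $\sigma$ that aggregates $\tau/T_s$ with the loop-gain and deadband parameters, hence is primary-control dependent; an ideal infinite-gain surface gives $L(0)\to\infty$ and $\sigma\to0$, i.e. perfect tracking.

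The main obstacle is justifying this sensitivity decomposition from the nonsmooth, finite-gain switched law rather than from a linear controller, and in particular proving the offset claim $\delta[n]=\sigma(\Delta y_z^{ref}[n]+\Delta P^{r,out}[n])$ with one constant. Concretely I would (i) average the switched control \eqref{Eqn:QudotThreshold} over a limit cycle inside the deadband and show its mean equivalent control balances the disturbance $4E_t-\dot Q^{r,out}$ up to an $O(\sigma)$ residual, (ii) linearize the thresholds \eqref{Eqn:SigmaBound} about the nominal comfort operating point and show the residual is proportional to the absorbed-power increment, and (iii) use the stored-energy term from the first ingredient to break the equilibrium collapse so that $\Delta y_z$ and $\Delta P^{r,out}$ enter with genuinely different coefficients. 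Controlling the higher-order terms of this linearization over the three-hour availability horizon, so that a single constant droop $\sigma$ suffices, is the delicate part; the remaining steps are bookkeeping.
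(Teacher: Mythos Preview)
Your route is genuinely different from the paper's, and the part you flag as ``the main obstacle'' is exactly where the argument does not close. The paper never invokes a sensitivity/complementary-sensitivity decomposition, never uses the deadband thresholds \eqref{Eqn:QudotThreshold} or a boundary-layer offset $\delta[n]$, and never averages over a limit cycle. Instead it works directly with the internal power $P'=P^{rated}u$: from \eqref{Eqn_ConvModel}, \eqref{Eqn:ControlEnergy}, \eqref{Eqn:CtrlDynamics} one obtains an explicit ODE $\dot P'=aP'+b(\tilde x)\,\sigma+c(\tilde x)$ with $a=-2/(RC_w)$ and $b(\tilde x)=\alpha/|\sigma|$ (here $\sigma=y_z-y_z^{ref}$ is the sliding variable). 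Setting $\dot P'=0$ at consecutive secondary instants and differencing gives the linearized balance $a\,\Delta P'[n]=b\,\Delta\sigma[n]$. Two purely algebraic identities then finish the job: $\Delta\sigma[n]=\Delta y_z[n]-\Delta y_z^{ref}[n]$ by definition, and $\Delta P'[n]=\Delta y_z^{ref}[n]-\Delta P^{r,out}[n]$ from the reference structure \eqref{Eqn:yzRef}. Substituting and rearranging yields \eqref{Eqn:Droop} with the droop constant $\sigma_{\mathrm{droop}}=b^{-1}a$, so the primary-control dependence is explicit through the sliding-mode gain $\alpha$.

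Against this, your plan has two concrete difficulties. First, the claimed offset law $\delta[n]=\sigma\bigl(\Delta y_z^{ref}[n]+\Delta P^{r,out}[n]\bigr)$ is the whole proposition in disguise: you are asserting that a single scalar sensitivity simultaneously multiplies the reference step and the absorbed-power step, but for the nonsmooth $\mathrm{sign}(\cdot)$ law there is no DC loop gain $L(0)$ to appeal to, and your proposed remedy (average the switched law over a limit cycle, then linearize the thresholds about a nominal point) does not obviously produce \emph{equal} coefficients on the two increments; you would need an additional structural identity, which is precisely what the paper extracts from \eqref{Eqn:yzRef}. Second, your first ingredient relates the window-averaged $\bar P^{r,out}[n]$ and $\bar y_z[n]$ to endpoint samples, whereas the proposition is stated for endpoint increments $\Delta(\cdot)[n]$; passing from one to the other requires the quasi-static assumption anyway, at which point the stored-energy term $(\tau/T_s)(y_z[n]-y_z[n-1])$ you rely on to ``break the equilibrium collapse'' vanishes and cannot supply the separation you want. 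The cleaner fix is to abandon the sensitivity framing and instead derive the $P'$ dynamics directly; the three-way droop then falls out from the two algebraic substitutions above without any boundary-layer analysis.
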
 
 
 These droop relations are utilized to solve an MPC problem posed over a planning time horizon of 1 minute to ensure accumulated power consumption adjustments $\Delta P^{r,out}[n]$ tack a regulation signal $P^{reg}[k]$ arriving every minute. 
 The secondary layer control problem is thereby  posed as follows: 
 \begin{subequations}
    \begin{gather}
\mathop {\min }\limits_{\Delta {y_{z}}^{ref}[n]} \left(\sum\limits_{n{T_s} = (k{T_t})}^{{n}{T_s} = (k + 1){T_t}} {{\mu ^e} {{P^{r,out}}[n]}}\right) +\qquad\qquad\qquad\notag\\
{
{\mu ^{reg}}\left|\left( 
\sum\limits_{n{T_S} = (k{T_t})}^{{n}{T_s} = (k + 1){T_t}} 
{{\Delta {P^{r,out}}[n]}}\right) - {P^{reg}}[k] \right|}  \\
{s.t.}\qquad \Delta {y_z}[n] = (1-\sigma) \Delta {y_z}^{ref}[n] - \sigma \Delta {P}^{r,out}[n]\label{Eqn:Droop2}\\
\qquad\qquad\qquad\qquad\qquad {y_z}\left(kT_t\right)= {y_{z0}}\notag\\
\begin{array}{l}
\frac{1}{R}\left(T^{ref} - T^{db} - T_0\right) \le {y_z}[n]\\
\qquad \qquad \qquad  \le \frac{1}{R}\left(T^{ref} + T^{db} - T_0\right) \label{Eqn:yzLim}\\
\qquad \forall nT_s\in \left(kT_T,(k+1)T_t\right)
\end{array}
    \end{gather}
    \label{Eqn:SecondaryCtrlPorblem}
\end{subequations}

In the droop model in Eqn. \eqref{Eqn:Droop2}, we assume the droop $\sigma$ although is operating conditions dependent, is changing much slower than $T_s$.
The droop equation forms the secondary control discrete time model with state $y_z$, control $\Delta y_z^{ref}$ and the free variable $\Delta P^{r,out}[n]$. 
$\mu^{reg}$ is the
penalty of not following the regulation signal, while $\mu^{e}$ is the fixed energy cost being paid by the device. 
As a result, the objective of secondary control is to optimize the trade-offs between energy consumption and provably supply of reserves.
In addition we incorporate thermal comfort thruogh the limits on $y_z$ in Eqn. \eqref{Eqn:yzLim} stemming from the definition of $y_z$ in Eqn. \eqref{Eqn:Output} and to incorporate the temperature bounds stated in Objective 2. 
The number of switching cycles can also be accommodated as an additional constraint based on the bounds on $y_z$ utilized for use in primary control in Eqn. \eqref{Eqn:SigmaBound}. 
The result of this optimization is the sequence of reference signals $y_z^{ref} [n]$ to which the primary control in  Eqns. \eqref{Eqn:CtrlDynamics}, \eqref{Eqn:QudotThreshold} responds to. 

\subsection{Case of single HVAC}
The interactive primary and secondary controllers are simulated for a house with HVAC with permissible temperature of with secondary control timestep of $T_s=5$ minutes and a horizon length of 1 step to obtain secondary control actions as the regulation signal arrives. The assumed costs are $\mu^{reg}=100 \$/kW$ and $\mu^e=10\$/kWh$. The trajectories obtained with this approach is compared with that when the horizon length is $12$ steps, amounting to 1 hour of planning. 

The overlaid plots of temperature are for the cases with and without MPC are shown in Fig. \ref{Fig:SecondaryCtrlT}. Notice that the temperature crosses the limits at few time instants because of the constraint on $y_z$ with upper and lower bounds serving as proxies to temperature constraints were softened by plugging them in objective function with a penalty factor of $1e2$.

Notice that the secondary control MPC problem computes the reference signals that would result in the temperatures to be within pre-specified limits. As a result, the primary control implementation would not lead to saturation. 

\begin{figure}[!htbp]
	\centering
	\subfigure[With MPC]{\includegraphics[width=0.485\linewidth]{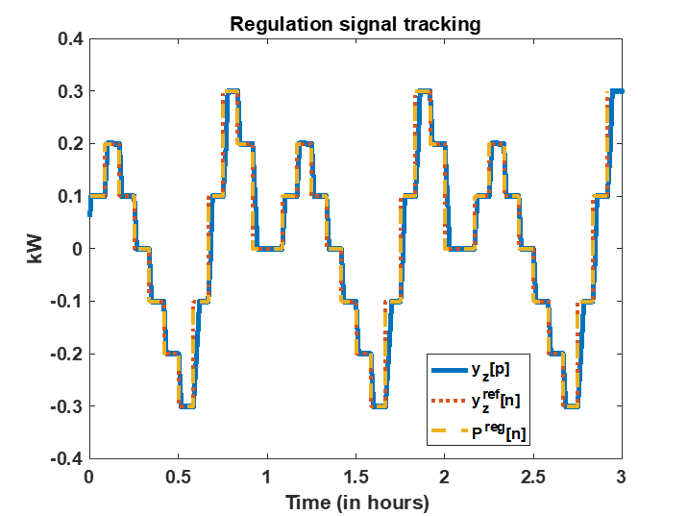}\label{Fig:SecondaryCtrlMPC}}
	\subfigure[ Without MPC]{\includegraphics[width=0.485\linewidth]{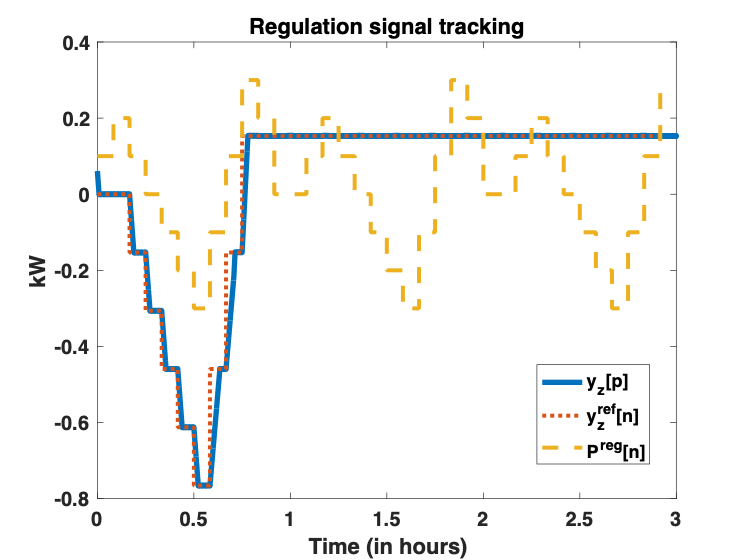}\label{Fig:SecondaryCotrnolNoMPC} }
	\caption{Tracking of regulation signal with interactive primary and secondary control of HVAC system}
\end{figure}

\begin{figure}[!htbp]
	\centering
	{\includegraphics[width=0.7\linewidth]{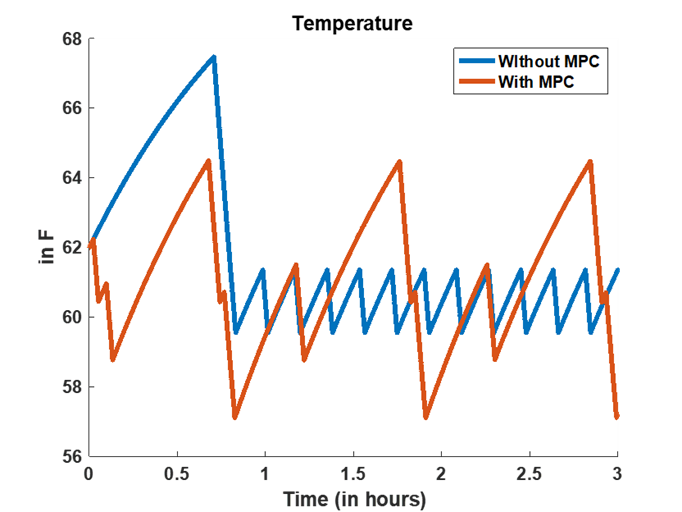} }
	\caption{Temperature trajectories for the cases with and without MPC-based secondary control interacting with primary control of HVAC system $y_z$. 
	}
	\label{Fig:SecondaryCtrlT}
\end{figure}

Fig. \ref{Fig:SecondaryCtrlMPC} and \ref{Fig:SecondaryCotrnolNoMPC} respectively show the result of the control action for the case with and without MPC respectively . In these plots, the primary control action results in fast-changing values of $y_z$ shown in blue perfectly chasing the reference signal $y_z^{ref} [n]$ in red computed every $5$ minutes.  Overlaid is also the regulation signal, which is approximately same as the secondary control action $y_z^{ref}[n]$ since $\Delta y_z[n]$ is ensured to remain close to zero to minimize temperature deviations. 
Notice from Figure \ref{Fig:SecondaryCtrlMPC} that the case with MPC results in perfect tracking of the regulation signal while the case without MPC emphasizes more on minimizing objectives at current time instant without any foresight of future temperature trajectories. 

\subsection{Case of 50 HVACs}
In the previous subsection, we softened the temperature constraint to prevent infeasibility of the posed secondary control problem in Eqn. \eqref{Eqn:SecondaryCtrlPorblem}. However, this results in unacceptable temperature deviations. To make this a hard constraint, it becomes important to consider more HVAC units. 

We thus consider next the problem of tracking the regulation signal by a group of 50 HVAC units with the interactive primary and secondary control as shown in Figure \ref{fig:HVAC_Prop}. A simple way to accommodate multiple HVAC units for secondary control is to duplicate the constraints in Eqn. \eqref{Eqn:Droop} and \eqref{Eqn:yzLim} for each HVAC unit and replace the objective for single HVAC power consumption $P^{r,out}[n]$ with the power consumption of all HVAC units i.e. summation of $P^{r,out}[n]$ of all HVAC units. 


We utilize real data taken from Mueller community of Pecan Street Inc (PSI) to obtain a regulation signal representative of the disturbances created by renewable as seen by the Austin node of the transmission grid modeled by ERCOT, the independent transmission system operator in Texas. 
We further utilize the parameters of 50 HVAC units for simulations and the temperature preferences of the respective units from the survey data as was made available by PSI to carry out this experiment. 

Shown in Fig. \ref{Fig:HVACIndi} are the individual power consumption trajectories of randomly selected HVAC units. 
50 HVACs together adjust their power consumption from their baseline values as shown in red in Fig.\ref{Fig:HVACAggreg} overlaid on top of the commanded regulation signal in blue. Notice that the perfect tracking was achievable even with a small number of HVAC units. 
\begin{figure}[!htbp]
	\centering
	{\includegraphics[width=0.9\linewidth]{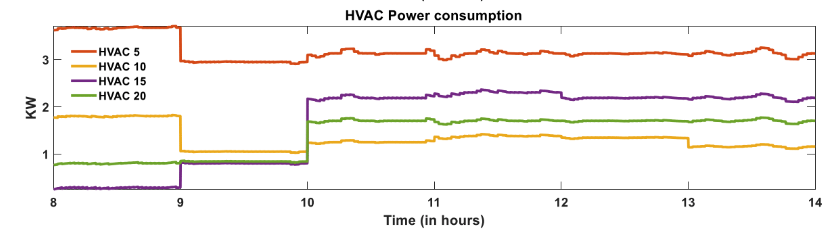} }
	\caption{Power consumption of randomly selected HVAC units
	}
	\label{Fig:HVACIndi}
\end{figure}

\begin{figure}[!htbp]
	\centering
	{\includegraphics[width=0.9\linewidth]{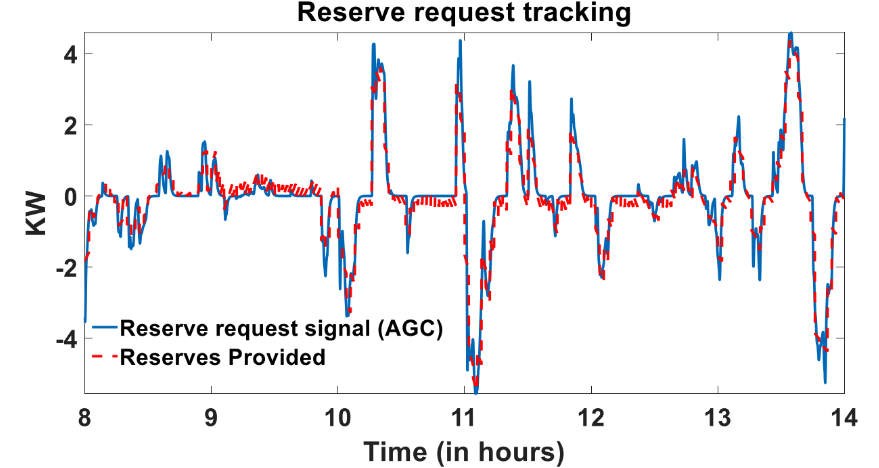} }
	\caption{Aggregate performance of 50 HVACs tracking the reserve signal
	}
	\label{Fig:HVACAggreg}
\end{figure}



\section{Conclusion}
\label{Sec:Conclusions}
The contributions of this paper are multi-fold.  First,  a novel energy-based  approach  is introduced to facilitate model-free control design utilizing minimal sensor measurement data for implementation.  Using this modeling,  a  robust sliding mode control  of energy dynamics is derived, along with theoretical conditions under which proposed control design results in a provable three-way droop relation between output variable of interest, secondary control signals and the desired electrical power adjustments over slower time scales.  It is then shown how a  secondary layer model predictive controller accommodates provable droop relations, performance metrics and implementation constraints. Finally, simulation-based evidence is presented  for backing up claims on efficiency and provability of the proposed design method using real data and practical consideration of a HVAC unit in one community in Texas. In particular, it is shown that the controlled real power consumed by the HVAC comprises two components: one that does real work and the other corresponding to wasted work. The former compensates for the effect of energy imbalances created while participating in the provision of regulation reserves.  The latter corresponds to the wasted energy required to compensate for the effect of thermal consumption created due to the temperature set-point adjustment for tracking the regulation reserve signal.  
Notably, this first-of-its-kind relation between electrical and thermal processes is viewed as being fundamental to having sufficient granularity of the model. Further work is needed toward deploying these controllers and fully utilizing their potential.


%

\appendices

\section{Proof of Proposition 1}
\label{Sec:AppProof}
\begin{proof}
We first derive the expressions for energy variables in context of the simplified dynamical models in use in Eqn. \eqref{Eqn_ConvModel}. 
In Definition \ref{Defn:Energy}, the entropy is a complex function of temperature given by $S = K(T)$, which leads to the simplification for stored energy as in Eqn. \eqref{Eqn_IntEnergyAlt}
Since entropy and entropy flow are abstract concepts and are not physically measurable, empirical relations are often utilized. The term  ${\left({T-T_0}\right)\frac{{\partial K(T)}}{{dT}}}$ is assumed constant and is called the thermal capacitance  $C_w$ used in the simplified model in Eqn. \eqref{Eqn_ConvModel}, thereby resulting in following commonly used relation 
\begin{subequations}
	 \begin{equation}
	 E \approx U = \int_0^t {T\frac{{\partial K(T)}}{{dT}}\frac{{dT}}{{dt}}dt}  = \underbrace {\frac{{\partial K(T)}}{{dT}}{\left( {T - {T_0}} \right)}}_{{C_w}}\left( {T - {T_0}} \right)
	 \label{Eqn_IntEnergyAlt}
	 \end{equation}

Here, we have assumed initial temperature is the ambient temperature $T_0$. Next, the stored energy in tangent space in Eqn.  \eqref{Eqn:EtEnergy} can similarly be simplified as 
\begin{equation}
    \begin{array}{l}
{E_t} = \int_0^t {\frac{{dT}}{{dt}}\frac{d}{{dt}}\left( {\frac{{\partial K(T)}}{{dT}}\frac{{dT}}{{dt}}} \right)dt}  = \\
 = \int_0^t {\frac{{\partial K(T)}}{{dT}}\frac{{dT}}{{dt}}\frac{{{d^2}T}}{{d{t^2}}}dt}  + \int_0^t {\frac{d}{{dt}}\left( {\frac{{\partial K(T)}}{{dT}}} \right){{\left( {\frac{{dT}}{{dt}}} \right)}^2}dt} \\
 = \frac{1}{2}\frac{{\partial K(T)}}{{dT}}{\left( {\frac{{dT}}{{dt}}} \right)^2} = \frac{{{C_w}}}{{2\left(T-T_0\right)}}\left( {\frac{{dT}}{{dt}}} \right)^2
\end{array}
    \label{Eqn:EtExp}
\end{equation}
In the second equation above, we have assumed the term ${\frac{{\partial K(T)}}{{dT}}}$ is time independent. We have further assumed the time derivative of temperature at initial time is zero.
For the simplified model, the injected power as viewed from the perspective of thermal energy domain is given as in Eqn. \eqref{Eqn_Psupply}. However it also must satisfy the definition of instantaneous power in Eqn. \eqref{Eqn_RealReacP}, thereby resulting in: 
\begin{equation}
    \begin{array}{l}
{S_f}\left( {T - {T_0}} \right) = {{\dot m}_a}{C_p}\left( {{T_{sup}} - T} \right)\\
 \Rightarrow {S_f} = \left( {{{\dot m}_a}\frac{{{C_p}}}{{T - {T_0}}}} \right)\left( {{T_{sup}} - T} \right)
\end{array}
\end{equation}

Here, we assume the term $S_f$ is time-independent. 
With this effort flow demarcation for the thermal processes involved, we can derive the rate of reactive power based on Eqn. \eqref{Eqn_RealReacP} as:
\begin{equation}
    \begin{array}{l}
{{\dot Q}_T} = \left( {T - {T_0}} \right)\frac{d}{{dt}}\left( {\frac{{{{\dot m}_a}{C_p}}}{{T - {T_0}}}\left( {{T_{\sup }} - T} \right)} \right)\\
\qquad  - \left( {\frac{{{{\dot m}_a}{C_p}}}{{T - {T_0}}}\left( {{T_{\sup }} - T} \right)} \right)\frac{d}{{dt}}\left( {T - {T_0}} \right)
\end{array}
\label{Eqn:QTExp}
\end{equation}
\end{subequations}

\begin{subequations}
Defining the damping losses $\frac{E}{\tau}$ with the time constant $\tau = \frac{R}{C_w}$, we can derive the 
the first equation in Eqn. \eqref{Eqn_GeneralEnergyModel} starting from the thermal model in Eqn. \eqref{Eqn_ConvModel} as follows
\begin{equation}
    \dot{E} = C_w \frac{dT}{dt} = -\frac{1}{R}\left(T - T_0\right) + P_u = -\frac{E}{\tau} + P_u
\end{equation}

Now let us denote the integrand in Eqn. \eqref{Eqn_IntEnergyAlt} as rate of change of stored energy $p$. Taking its time derivative, we obtain the following simplification
\begin{equation}
    \begin{array}{l}
\dot p = \frac{{\partial K(T)}}{{dT}}{\left( {\frac{{dT}}{{dt}}} \right)^2} + \frac{{\partial K(T)}}{{\partial T}}\left( {T - {T_0}} \right)\frac{{{d^2}T}}{{d{t^2}}}\\
 = 2{E_t} + {C_w}\frac{{{d^2}T}}{{d{t^2}}}
\end{array}
\end{equation}
Here we used the expression for stored energy in tangent space from Eqn. \eqref{Eqn:EtExp}.
Expanding the second term using the model in Eqn. \eqref{Eqn_ConvModel}, we obtain
\begin{equation}
    \begin{array}{l}
\dot p = 2{E_t} + \frac{d}{{dt}}\left( {{{\dot m}_a}{C_p}\left( {{T_{\sup }} - T} \right) - \frac{1}{R}\left( {T - {T_0}} \right)} \right)\\
 = 2{E_t} + \frac{d}{{dt}}\left( {{S_f}\left( {T - {T_0}} \right)} \right) - \frac{1}{R}\frac{d}{{dt}}\left( {T - {T_0}} \right)\\
 = 2{E_t} + \left( {\left( {T - {T_0}} \right)\frac{{d{S_f}}}{{dt}} - {S_f}\frac{d}{{dt}}\left( {T - {T_0}} \right)} \right)\\
 + {S_f}\frac{{dT}}{{dt}} - \frac{1}{R}\frac{{dT}}{{dt}}\\
 = 2{E_t} + {{\dot Q}_T} + \frac{{{C_w}}}{{T - {T_0}}}{\left( {\frac{{dT}}{{dt}}} \right)^2}
 = 4{E_t} + {{\dot Q}_T}
\end{array}
\label{Eqn:Temp22}
\end{equation}

Also for a switching device, considering the left hand side and right-hand side reactive power flows $\dot{Q}_{sw,L},\dot{Q}_{sw,R}$ as depicted in Figure \ref{Fig:QdotSwitch}, we have
\begin{equation}
\dot{Q}_{sw,R} = \left(
    \dot{Q}_{sw,L} + \dot{Q}_{u}\right)
    \label{Eqn:QBal}
\end{equation}
Also, since $\dot{Q}_{sw,L} = \dot{Q}^{r,out} $ and $\dot{Q}_{sw,R} = -\dot{Q}_T$ in this case, we can further simplify Eqn. \eqref{Eqn:Temp22} as 
\begin{eqnarray}
    \dot{p} &= 4E_t - \dot{Q}_{sw,R} 
    = 4E_t - \left(\dot{Q}_{sw,L} + \dot{Q}_u\right)\notag\\
    &=4E_t - \dot{Q}^{r,out} - \dot{Q}_u 
\end{eqnarray}
\end{subequations}
\end{proof}

\section{Proof of Theorem  1}
\label{Sec:ProofTheorem}
\begin{proof}
\textit{(1)}:
From the expression for stored energy in Eqn. \eqref{Eqn_IntEnergyAlt}, $E = C_w \left(T - T_0\right)$ and $p = \dot{E} = -\frac{1}{R}\left(T - T_0\right) + P^{rated}u =
-\frac{1}{R}\left(\frac{E}{C_w} - T_0\right) + + P^{rated}u
$. Clearly, there is a one-one mapping from $\left[T, u\right]$ to $\left[E,p\right]$. We can thereby utilize energy space model in place of the conventional state space model for the rest of analysis and control design.
Even if the AHU dynamics were modeled, we could establish diffeomporshism between states appearing at the interfaces and the energy space variables. For details, refer to \cite{jaddivada2020unified}. 

\textit{(2)\&(3)}:
Let $\sigma = y_z - y_z^{ref}$,
Starting from Eqn. \eqref{Eqn:Output} and \eqref{Eqn:yzRef} and taking the time derivative we have 
\begin{equation}
    \begin{array}{l}
\frac{d\sigma}{dt} = \frac{d}{{dt}}\left( {{y_z} - {y_z}^{ref}} \right) =  - \frac{1}{R}\frac{{dT}}{{dt}} - {{\dot P}_u}\\
 =  - \frac{d}{{dt}}\left( {\frac{1}{R}\left( {T - {T_0}} \right) + {P_u}} \right)\\
 =  - {C_w}\dot T =  - \dot p =  - \left( {4{E_t} + 2{{\dot Q}_T}} \right) + {{\dot Q}_u}
\end{array}
\end{equation}
Plugging in the control design in Eqn. \eqref{Eqn:ControlEnergy}, we obtain 
\begin{equation}
    {\sigma} = \le \left(\overline{L} - \alpha\right) sign(\sigma) \le -K sign(\sigma)
\end{equation}

Consider next a storage function $V = \frac{1}{2} \sigma^2$.
We obtain 
\begin{equation}
    \dot{V} = \sigma \dot{\sigma} \le -K\left|\sigma\right|  = -K(2V)^{1/2}
    \label{Eqn:Temp11}
\end{equation}
From this expression, clearly $\dot{V} \le 0$ for positive gain $K$ and by LaSalle's invariance theorem, we obtain the desired result. 
Furthermore, taking the time integral of Eqn. \eqref{Eqn:Temp11}, we obtain 
\begin{equation}
    \begin{array}{l}
{V^{1/2}}(t) - {V^{1/2}}(0) \le  - \sqrt 2 Kt\\
 \Rightarrow |\sigma (t)| - |\sigma (0)| \le  - \sqrt 2 Kt
\end{array}
\end{equation}
For $\sigma(t) \rightarrow 0$, rearranging terms, we see that the maximum reaching time is $t_r \le \frac{\sigma (0)|}{\sqrt{2} K}$
\end{proof}

\section{Proof of Proposition 2}
\label{Sec:DroopProof}
\begin{subequations}
Consider the closed loop dynamical model in Eqn. \eqref{Eqn_ConvModel}, \eqref{Eqn:ControlEnergy} and \eqref{Eqn:CtrlDynamics}. Let us denote $P' = P^{rated} u$ in Eqn. \eqref{Eqn_ConvModel} and let us characterize its dynamics by reexpressing rate of reactive power in terms of rate of instantaneous power following the definitions in Eqn. \eqref{Eqn_RealReacP}. We thereby obtain
\begin{equation}
    \begin{array}{l}
\dot T =  - \frac{1}{{R{C_w}}}(T - {T_0}) + \frac{1}{{{C_w}}}{P'}\\
{{\dot P}'} = 2\frac{{{P'}}}{{T - {T_0}}}\frac{{dT}}{{dt}} - \dot{Q}_{sw,L} \\
=2\frac{{{P'}}}{{T - {T_0}}}\frac{{dT}}{{dt}} + \alpha sign(\sigma ) -\dot{Q}_{T} 
\end{array}
\end{equation}
The second equation simplification is a result of Eqn. \eqref{Eqn:QBal} and further substuting of the expression of $\dot{Q}_u$ from Eqn. \eqref{Eqn:ControlEnergy}. 
Simplifying the resulting equation further by substituting the temperature dynamics, we obtain
\begin{equation}
    {{\dot P}'} = \underbrace {\left( {\frac{{ - 2}}{{R{C_w}}}} \right)}_a{P'} + \underbrace {\frac{\alpha }{{|\sigma |}}}_{b(\tilde x)}\sigma  + \underbrace {2\frac{1}{{{C_w}(T - {T_0})}}{P'}^2 + \dot{Q}_T}_{c(\tilde x)}
    \label{Eqn:Temp0}
\end{equation}
At time t = $nT_s$ taking the time derivative to zero. Repeating the same at $t = (n-1)T_s$ and then taking the difference between the two equations, we obtain
\begin{equation}
    a\Delta {P'}[n] = b(\tilde x)\Delta \sigma [n]
    \label{Eqn:Temp1}
\end{equation}
Here, $\Delta \left(.\right)[n] = \left(.\right)(nT_s) - \left(.\right)((n-1)T_s)$.
We have ignored the incremental effects of $c(\tilde{x})$ in Eqn. \eqref{Eqn:Temp0} to obtain Eqn. \eqref{Eqn:Temp1}. We have also assumed the coefficient $b(\tilde{x})$ is almost constant between subsequent $nT_s$ timesteps. 
Since $\sigma = y_z - y_z^{ref}$,
we can establish the following relation
\begin{equation}
    \Delta \sigma[n] =  \Delta y_z[n] - \Delta y_z^{ref} [n]
    \label{Eqn:Temp2}
\end{equation}
Furthermore, from Eqn. \eqref{Eqn:yzRef}, we can write an incremental relation as 
\begin{equation}
    \Delta P' [n] = \Delta y_z^{ref}[n]
    - \Delta P^{r,out}[n]
    \label{Eqn:Temp3}
\end{equation}
Substituting Eqn. \eqref{Eqn:Temp3} and \eqref{Eqn:Temp2} into Eqn. \eqref{Eqn:Temp1}, we obtain
\begin{equation}
    a(\Delta {y_z}^{ref}[n] - \Delta {P^{reg}}[n]) = b(\tilde x)\left( {\Delta {y_z}[n] - \Delta {y_z}^{ref}[n]} \right)
\end{equation}
Rearranging the terms, we obtain
\begin{equation}
    \Delta {y_z}[n] = \underbrace {\left( {1 - b{{(\tilde x)}^{ - 1}}a} \right)}_{1 - \sigma (\tilde x)}\Delta {y_z}^{ref}[n] - \underbrace {\left( {b{{(\tilde x)}^{ - 1}}a} \right)}_{\sigma (\tilde x)}\Delta {P^{reg}}[n]
\label{Eqn:DroopExpansion}
\end{equation}
\end{subequations}

\begin{remark}
    Notice the similarity of HVAC droop relation in Eqn. \eqref{Eqn:DroopExpansion} to the one typically utilized for generation resources \cite{ilic2011possible,ilic2000dynamics}
    \begin{equation}
        \Delta \omega[n] = \left(1 -\sigma_G D_G\right) \Delta \omega^{ref}[n] -\sigma_G \Delta P^{reg}[n]
    \end{equation}
    Here $\sigma_G$ is the generator droop and $D$ is the rotor damping. 
\end{remark}
\section*{Acknowledgment}
 DISTRIBUTION STATEMENT A. Approved for public release. Distribution is unlimited.
 \\
 
This material is based upon work supported by the Department of Energy
under Air Force Contract No. FA8702-15-D-0001. Any opinions, findings,
conclusions or recommendations expressed in this material are those of the
author(s) and do not necessarily reflect the views of the Department of
Energy.
\\

\copyright 2021 Massachusetts Institute of Technology
\\

Delivered to the U.S. Government with Unlimited Rights, as defined in
DFARS Part 252.227-7013 or 7014 (Feb 2014). Notwithstanding any copyright
notice, U.S. Government rights in this work are defined by DFARS
252.227-7013 or DFARS 252.227-7014 as detailed above. Use of this work
other than as specifically authorized by the U.S. Government may violate any
copyrights that exist in this work.

\ifCLASSOPTIONcaptionsoff
  \newpage
\fi



\bibliographystyle{unsrt}
\bibliography{IEEEabrv,HVAC}

%









\end{document}